\documentclass[conference]{IEEEtran}
\usepackage{amsmath}
\usepackage{amssymb}
\usepackage{amsthm}
\usepackage[T1]{fontenc}
\usepackage{hyperref}
\usepackage{mleftright}

\interdisplaylinepenalty=2500
\newcommand{\binaryemptyleft}{\mkern-\medmuskip{}}
\newcommand{\card}[1]{\lvert#1\rvert}
\newcommand{\infvphantomsup}{\operatorname*{inf\vphantom{sup}}}
\newcommand{\pr}[1]{\Pr[#1]}
\newcommand{\qxcommaqylessspace}{Q_X\hspace{-0.1em},Q_Y}
\newcommand{\ratepxytilde}{\tilde{\mathsf{E}}_\mathsf{P}}
\newcommand{\ratepxy}{\mathsf{E}_\mathsf{P}}
\newcommand{\rateqxqy}{\mathsf{E}_\mathsf{Q}}
\newcommand{\reals}{\mathbb{R}}
\newcommand{\set}[1]{\mathcal{#1}}
\newcommand{\spaceland}{\hspace{0.07em}\land\hspace{0.07em}}
\newcommand{\spacelor}{\hspace{0.07em}\lor\hspace{0.07em}}
\newtheorem{Corollary}{Corollary}
\newtheorem{Example}[Corollary]{Example}
\newtheorem{Lemma}[Corollary]{Lemma}
\newtheorem{Remark}[Corollary]{Remark}
\newtheorem{Theorem}[Corollary]{Theorem}

\begin{document}
\title{Testing Against Independence and\\a R\'enyi Information Measure}

\author{\IEEEauthorblockN{Amos Lapidoth and Christoph Pfister}
\IEEEauthorblockA{Signal and Information Processing Laboratory\\
ETH Zurich, 8092 Zurich, Switzerland\\
Email: \{lapidoth,pfister\}@isi.ee.ethz.ch}}

\maketitle

\begin{abstract}
The achievable error-exponent pairs for the type~I and type~II errors are characterized in a hypothesis testing setup where the observation consists of independent and identically distributed samples from either a known joint probability distribution or an unknown product distribution.
The empirical mutual information test, the Hoeffding test, and the generalized likelihood-ratio test are all shown to be asymptotically optimal.
An expression based on a R\'enyi measure of dependence is shown to be the Fenchel biconjugate of the error-exponent function obtained by fixing one error exponent and optimizing the other.
An example is provided where the error-exponent function is not convex and thus not equal to its Fenchel biconjugate.
\end{abstract}

\section{Introduction}

Let $\set{X}$ and $\set{Y}$ be finite sets and $P_{XY}$ a probability mass function (PMF) over $\set{X} \times \set{Y}$.
Based on a sequence of pairs of random variables $\{(X_i,Y_i)\}_{i=1}^n$, we want to distinguish between two hypotheses:
\begin{itemize}
\item[0)] Under the null hypothesis, $(X_1,Y_1),\ldots,(X_n,Y_n)$ are IID according to $P_{XY}$.
\item[1)] Under the alternative hypothesis, $(X_1,Y_1),\ldots,(X_n,Y_n)$ are IID according to some unknown PMF of the form $Q_{XY} = Q_X Q_Y$, where $Q_X \in \set{P}(\set{X})$ and $Q_Y \in \set{P}(\set{Y})$ are arbitrary PMFs over $\set{X}$ and $\set{Y}$, respectively.
\end{itemize}

An error-exponent pair $(\ratepxy,\rateqxqy) \in \reals^2$ is achievable if there exists a sequence of deterministic tests $\{T_n\}_{n=1}^\infty$ satisfying the following two conditions:
\begin{IEEEeqnarray}{l}
\liminf_{n \to \infty} -\frac{1}{n} \log P_{XY}^{\times n}[T_n(X^n,Y^n) = 1] > \ratepxy,\IEEEeqnarraynumspace\label{eq:deferrexponentpxy}\\
\liminf_{n \to \infty} \hspace{-0.03em}\inf_{\qxcommaqylessspace}\! -\frac{1}{n} \log (Q_X Q_Y)^{\times n}[T_n(X^n,Y^n) = 0] > \rateqxqy,\!\IEEEeqnarraynumspace\label{eq:deferrexponentqxqy}
\end{IEEEeqnarray}
where a deterministic test $T_n$ is a function from $\set{X}^n \times \set{Y}^n$ to $\{0,1\}$;
we denote by $R_{XY}^{\times n}[\set{A}]$ the probability of an event $\set{A}$ when $\{(X_i,Y_i)\}_{i=1}^n$ are IID according to $R_{XY}$;
the infimum is over all $Q_X \in \set{P}(\set{X})$ and all $Q_Y \in \set{P}(\set{Y})$; and
all logarithms in this paper are natural logarithms.
If an error-exponent pair $(\ratepxy,\rateqxqy)$ is achievable, then, since the inequalities in \eqref{eq:deferrexponentpxy} and \eqref{eq:deferrexponentqxqy} are strict, there exists a sequence of tests $\{T_n\}_{n=1}^\infty$ such that for sufficiently large $n$ and for all $(Q_X,Q_Y) \in \set{P}(\set{X}) \times \set{P}(\set{Y})$,
\begin{IEEEeqnarray}{rCl}
P_{XY}^{\times n}[T_n(X^n,Y^n) = 1] &\le& e^{-n \ratepxy},\IEEEeqnarraynumspace\label{eq:pxyerrorsmallerexp}\\
(Q_X Q_Y)^{\times n}[T_n(X^n,Y^n) = 0] &\le& e^{-n \rateqxqy}.\IEEEeqnarraynumspace\label{eq:qxqyerrorsmallerexp}
\end{IEEEeqnarray}
(The reverse is not true: \eqref{eq:pxyerrorsmallerexp} and \eqref{eq:qxqyerrorsmallerexp} are not sufficient for the achievability of the pair $(\ratepxy,\rateqxqy)$; see Section~\ref{sec:preliminaries} for more motivation for our definition.)

Our first result characterizes the achievable error-exponent pairs.

\begin{Theorem}
\label{thm:pairachievableiffrxy}
An error-exponent pair $(\ratepxy,\rateqxqy)$ is achievable if, and only if, for all $R_{XY} \in \set{P}(\set{X} \times \set{Y})$,
\begin{IEEEeqnarray}{l}
\bigl(D(R_{XY}\|P_{XY}) > \ratepxy\bigr) \spacelor \bigl(D(R_{XY}\|R_X R_Y) > \rateqxqy\bigr).\IEEEeqnarraynumspace
\end{IEEEeqnarray}
This characterization is also valid when randomized tests are allowed in \eqref{eq:deferrexponentpxy} and \eqref{eq:deferrexponentqxqy}.
\end{Theorem}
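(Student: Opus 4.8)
The plan is to prove the two implications separately by the method of types, arranging the converse so that it automatically covers randomized tests (the direct part will produce a deterministic test). For the \emph{direct part}, I would first observe that the condition is stable under a small tightening of the exponents: the function $g(R_{XY})=\max\{D(R_{XY}\|P_{XY})-\ratepxy,\;D(R_{XY}\|R_XR_Y)-\rateqxqy\}$ is lower semicontinuous on the compact simplex $\set{P}(\set{X}\times\set{Y})$ (relative entropy is lower semicontinuous and $R_{XY}\mapsto D(R_{XY}\|R_XR_Y)$ is continuous), so if the condition holds then $g$ attains a strictly positive minimum, and hence there is $\epsilon>0$ such that the condition still holds with $(\ratepxy+\epsilon,\rateqxqy+\epsilon)$ in place of $(\ratepxy,\rateqxqy)$. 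I then take the Hoeffding-type test with $T_n(x^n,y^n)=0$ precisely when the joint type $\hat R$ of $(x^n,y^n)$ satisfies $D(\hat R\|P_{XY})\le\ratepxy+\epsilon$. Summing $P_{XY}^{\times n}[\{(x^n,y^n):\text{joint type}=R\}]\le e^{-nD(R\|P_{XY})}$ over the at most $(n+1)^{\card{\set{X}}\card{\set{Y}}}$ types $R$ with $D(R\|P_{XY})>\ratepxy+\epsilon$ bounds the type-I error by $(n+1)^{\card{\set{X}}\card{\set{Y}}}e^{-n(\ratepxy+\epsilon)}$. For the type-II error, any type $R$ with $D(R\|P_{XY})\le\ratepxy+\epsilon$ must, by the tightened condition, satisfy $D(R\|R_XR_Y)>\rateqxqy+\epsilon$; combining this with the identity $D(R_{XY}\|Q_XQ_Y)=D(R_{XY}\|R_XR_Y)+D(R_X\|Q_X)+D(R_Y\|Q_Y)\ge D(R_{XY}\|R_XR_Y)$ and the bound $(Q_XQ_Y)^{\times n}[\{(x^n,y^n):\text{joint type}=R\}]\le e^{-nD(R\|Q_XQ_Y)}$ bounds the type-II error by $(n+1)^{\card{\set{X}}\card{\set{Y}}}e^{-n(\rateqxqy+\epsilon)}$ \emph{uniformly} over $(Q_X,Q_Y)$. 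Taking $-\tfrac1n\log$ of the two bounds and letting $n\to\infty$ gives the two strict liminf inequalities, so $(\ratepxy,\rateqxqy)$ is achievable by a deterministic test.

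For the \emph{converse}, I argue the contrapositive for possibly randomized tests. Suppose there is $R^*_{XY}$ with $D(R^*_{XY}\|P_{XY})\le\ratepxy$ and $D(R^*_{XY}\|R^*_XR^*_Y)\le\rateqxqy$, and suppose toward a contradiction that a test sequence $\{T_n\}$ achieves $(\ratepxy,\rateqxqy)$. Since the two limit inferiors in the definition are \emph{strict}, there is a single $\delta>0$ such that, for all large $n$, $P_{XY}^{\times n}[T_n=1]\le e^{-n(\ratepxy+\delta)}$ and $(Q_XQ_Y)^{\times n}[T_n=0]\le e^{-n(\rateqxqy+\delta)}$ for all $(Q_X,Q_Y)$. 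Because $\ratepxy<\infty$ we have $R^*_{XY}\ll P_{XY}$, so I can pick types $\tilde R_n$ supported in $\operatorname{supp}(R^*_{XY})$ with $\tilde R_n\to R^*_{XY}$; then $D(\tilde R_n\|P_{XY})\to D(R^*_{XY}\|P_{XY})$ and $D(\tilde R_n\|\tilde R_{n,X}\tilde R_{n,Y})\to D(R^*_{XY}\|R^*_XR^*_Y)$. Let $\set{T}_n$ be the joint type class of $\tilde R_n$. The key observation is that within $\set{T}_n$ every sequence has the same $P_{XY}^{\times n}$-probability and every sequence has the same $(\tilde R_{n,X}\tilde R_{n,Y})^{\times n}$-probability, so, writing $\gamma_n$ for the average of $1-T_n$ over $\set{T}_n$, we get $P_{XY}^{\times n}[T_n=1]\ge(1-\gamma_n)\,P_{XY}^{\times n}[\set{T}_n]$ and $(\tilde R_{n,X}\tilde R_{n,Y})^{\times n}[T_n=0]\ge\gamma_n\,(\tilde R_{n,X}\tilde R_{n,Y})^{\times n}[\set{T}_n]$. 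Using $P_{XY}^{\times n}[\set{T}_n]\ge(n+1)^{-\card{\set{X}}\card{\set{Y}}}e^{-nD(\tilde R_n\|P_{XY})}$ and, since $(\tilde R_{n,X}\tilde R_{n,Y})^{\times n}[\set{T}_n]=\card{\set{T}_n}\,e^{-n(H(\tilde R_{n,X})+H(\tilde R_{n,Y}))}$, the bound $(\tilde R_{n,X}\tilde R_{n,Y})^{\times n}[\set{T}_n]\ge(n+1)^{-\card{\set{X}}\card{\set{Y}}}e^{-nD(\tilde R_n\|\tilde R_{n,X}\tilde R_{n,Y})}$, and combining with the two error bounds, yields $1-\gamma_n\le(n+1)^{\card{\set{X}}\card{\set{Y}}}e^{-n(\ratepxy+\delta-D(\tilde R_n\|P_{XY}))}$ and $\gamma_n\le(n+1)^{\card{\set{X}}\card{\set{Y}}}e^{-n(\rateqxqy+\delta-D(\tilde R_n\|\tilde R_{n,X}\tilde R_{n,Y}))}$. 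Adding these, the left side equals $1$ while the right side tends to $0$, because both exponents converge to a number that is at least $\delta>0$ — a contradiction. Since a deterministic test is a special case of a randomized one, this settles the converse in both settings.

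I expect the \emph{main obstacle} to be the boundary case in which $D(R^*_{XY}\|P_{XY})=\ratepxy$ or $D(R^*_{XY}\|R^*_XR^*_Y)=\rateqxqy$ holds with equality: there a plain type-counting bound leaves the chain of inequalities consistent, and one genuinely needs the margin $\delta>0$ that is available precisely because the limit inferiors in the definition of achievability are strict (the very same strictness that lets the direct part tighten the exponents by $\epsilon$ via the lower-semicontinuity/compactness argument). The only other point requiring care is keeping $\tilde R_n\ll P_{XY}$ so that $D(\tilde R_n\|P_{XY})$ is finite and converges to $D(R^*_{XY}\|P_{XY})$, which is legitimate since $\ratepxy<\infty$ forces $R^*_{XY}\ll P_{XY}$; the remaining ingredients (type-class size and probability estimates, lower semicontinuity and continuity facts for the information quantities, and the mutual-information identity) are routine.
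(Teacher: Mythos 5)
Your proposal is correct and follows essentially the same route as the paper: an $\epsilon$-tightening via lower semicontinuity and compactness, a type-based test (you use the Hoeffding test, which the paper also shows works, with explicit type-counting in place of the paper's appeal to Sanov's theorem plus the marginal-decomposition inequality), and a converse that restricts attention to the type class of types approximating $R_{XY}^*$, exploits equiprobability within the type class together with type-class probability lower bounds under $P_{XY}$ and under the product of the type's marginals, and covers randomized tests. Your only cosmetic deviation is summing $\gamma_n+(1-\gamma_n)=1$ against two vanishing bounds instead of the paper's ``one of the two conditional error probabilities is at least $1/2$'' argument, which is the same idea.
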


In Lemmas~\ref{lma:mutualinformationtest}--\ref{lma:glrt} we show that the empirical mutual information test, the Hoeffding test, and the generalized likelihood-ratio test can achieve every achievable error-exponent pair.
Defining the error-exponent functions $\ratepxy\colon \reals \to \reals \cup \{+\infty\}$ and $\rateqxqy\colon \reals \to \reals \cup \{+\infty\}$ as
\begin{IEEEeqnarray}{rCl}
\ratepxy(\rateqxqy) &\triangleq& \sup \, \{\ratepxy \in \reals : (\ratepxy,\rateqxqy) \text{ is achievable}\},\IEEEeqnarraynumspace\\
\rateqxqy(\ratepxy) &\triangleq& \sup \, \{\rateqxqy \in \reals : (\ratepxy,\rateqxqy) \text{ is achievable}\},\IEEEeqnarraynumspace
\end{IEEEeqnarray}
we obtain

\begin{Corollary}
\label{cor:expfixedexpoptimized}
For all $\rateqxqy \in \reals$,
\begin{IEEEeqnarray}{l}
\ratepxy(\rateqxqy) = \inf_{\substack{R_{XY} \in \set{P}(\set{X} \times \set{Y}):\\D(R_{XY}\|R_X R_Y) \le \rateqxqy}} D(R_{XY}\|P_{XY}),\IEEEeqnarraynumspace\label{eq:defepofeq}
\end{IEEEeqnarray}
and for all $\ratepxy \in \reals$,
\begin{IEEEeqnarray}{l}
\rateqxqy(\ratepxy) = \inf_{\substack{R_{XY} \in \set{P}(\set{X} \times \set{Y}):\\D(R_{XY}\|P_{XY}) \le \ratepxy}} D(R_{XY}\|R_X R_Y).\IEEEeqnarraynumspace\label{eq:defeqofep}
\end{IEEEeqnarray}
\end{Corollary}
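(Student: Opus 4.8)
The plan is to deduce Corollary~\ref{cor:expfixedexpoptimized} directly from Theorem~\ref{thm:pairachievableiffrxy} by an elementary unpacking of definitions; no new probabilistic ideas are required. The first step is to negate the characterization in Theorem~\ref{thm:pairachievableiffrxy}: a pair $(\ratepxy,\rateqxqy)$ \emph{fails} to be achievable precisely when there exists an $R_{XY} \in \set{P}(\set{X}\times\set{Y})$ with $D(R_{XY}\|P_{XY}) \le \ratepxy$ \emph{and} $D(R_{XY}\|R_X R_Y) \le \rateqxqy$. Write $\phi(\rateqxqy)$ for the right-hand side of~\eqref{eq:defepofeq}, namely the infimum of $D(R_{XY}\|P_{XY})$ over all $R_{XY}$ satisfying $D(R_{XY}\|R_X R_Y) \le \rateqxqy$. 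The goal is to show $\ratepxy(\rateqxqy) = \phi(\rateqxqy)$, and I would prove the two inequalities separately.

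For $\ratepxy(\rateqxqy) \le \phi(\rateqxqy)$, I would show that every achievable pair $(\ratepxy,\rateqxqy)$ satisfies $\ratepxy \le \phi(\rateqxqy)$. Suppose not, so that $\ratepxy > \phi(\rateqxqy)$; then the constraint set of~\eqref{eq:defepofeq} is nonempty (otherwise $\phi(\rateqxqy)=+\infty$) and, by the definition of the infimum, it contains some $R_{XY}$ with $D(R_{XY}\|P_{XY}) < \ratepxy$. For this $R_{XY}$ both clauses of the disjunction in Theorem~\ref{thm:pairachievableiffrxy} fail, contradicting achievability; taking the supremum over achievable $\ratepxy$ then gives the claim. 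For the reverse inequality $\ratepxy(\rateqxqy) \ge \phi(\rateqxqy)$, I would show that every $\ratepxy < \phi(\rateqxqy)$ yields an achievable pair $(\ratepxy,\rateqxqy)$: fixing such an $\ratepxy$ and an arbitrary $R_{XY}$, either $D(R_{XY}\|R_X R_Y) > \rateqxqy$, so the second clause of Theorem~\ref{thm:pairachievableiffrxy} holds, or $D(R_{XY}\|R_X R_Y) \le \rateqxqy$, in which case $R_{XY}$ lies in the constraint set of~\eqref{eq:defepofeq} and hence $D(R_{XY}\|P_{XY}) \ge \phi(\rateqxqy) > \ratepxy$, so the first clause holds. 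Thus the pair is achievable, and taking the supremum over these $\ratepxy$ gives $\ratepxy(\rateqxqy) \ge \phi(\rateqxqy)$.

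Two minor points remain, and they are the only places that need a little care. First, the suprema and infima must be read over $\reals\cup\{\pm\infty\}$: since relative entropy is nonnegative, $\phi(\rateqxqy)\ge 0$, and when $\rateqxqy<0$ the constraint set is empty, so $\phi(\rateqxqy)=+\infty$; in that regime the second clause of Theorem~\ref{thm:pairachievableiffrxy} holds for every $R_{XY}$, so every real $\ratepxy$ is achievable and $\ratepxy(\rateqxqy)=+\infty$ as well. Likewise every $\ratepxy<0$ is achievable for every $\rateqxqy$ (the first clause always holds), so the set of achievable $\ratepxy$ is never empty and $\ratepxy(\rateqxqy)>-\infty$. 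The only genuine subtlety is the bookkeeping of strict versus non-strict inequalities—achievability in~\eqref{eq:deferrexponentpxy}--\eqref{eq:deferrexponentqxqy} uses strict inequalities while the constraint in~\eqref{eq:defepofeq} is non-strict—but this is already handled by Theorem~\ref{thm:pairachievableiffrxy}, so nothing extra is needed. Finally, \eqref{eq:defeqofep} follows from~\eqref{eq:defepofeq} by interchanging the roles of the two error exponents, equivalently by repeating the two-inequality argument above with the clauses of Theorem~\ref{thm:pairachievableiffrxy} swapped, so there is nothing new to prove.
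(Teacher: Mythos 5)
Your proof is correct and follows essentially the same route as the paper: both deduce the corollary directly from Theorem~\ref{thm:pairachievableiffrxy} by showing that pairs with $\ratepxy$ strictly below the infimum are achievable and those strictly above are not, so the supremum equals the infimum. Your version merely spells out the two inequalities and the $\pm\infty$ edge cases that the paper leaves implicit.
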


Our next result relates the R\'enyi measure of dependence $J_\alpha(P_{XY})$ to $\ratepxy^{**}(\cdot)$, the Fenchel biconjugate of $\ratepxy(\cdot)$.
Both $J_\alpha(P_{XY})$ and $\ratepxy^{**}(\cdot)$ are discussed in Section~\ref{sec:preliminaries}.
(The analogous result for $\rateqxqy^{**}(\cdot)$ is Theorem~\ref{thm:renyiinfvsrateqxqy}.)

\begin{Theorem}
\label{thm:renyiinfvsratepxy}
For all $\rateqxqy \in \reals$,
\begin{IEEEeqnarray}{l}
\sup_{\alpha \in (0,1]} \frac{1 - \alpha}{\alpha} \bigl(J_\alpha(P_{XY}) - \rateqxqy\bigr) = \ratepxy^{**}(\rateqxqy).\IEEEeqnarraynumspace
\end{IEEEeqnarray}
Furthermore, $\ratepxy^{**}(\rateqxqy) = \ratepxy(\rateqxqy)$ for all $\rateqxqy \in \reals$ if, and only if, $\ratepxy(\cdot)$ is convex on $\reals$.
\end{Theorem}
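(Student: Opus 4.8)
The plan is to identify the left-hand side as the Fenchel biconjugate of $\ratepxy(\cdot)$ by computing the Fenchel conjugate directly and recognizing the Rényi quantity inside. First I would recall that, by Corollary \ref{cor:expfixedexpoptimized}, $\ratepxy(\cdot)$ is the value function of a convex program (an infimum of the convex map $R_{XY} \mapsto D(R_{XY}\|P_{XY})$ over the convex constraint set $\{R_{XY} : D(R_{XY}\|R_X R_Y) \le \rateqxqy\}$); since the objective and the single constraint function are convex in $R_{XY}$, standard convex-duality facts about perturbation functions apply. The key object is the Fenchel conjugate $\ratepxy^*(\lambda) \triangleq \sup_{\rateqxqy} (\lambda \rateqxqy - \ratepxy(\rateqxqy))$. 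I expect that $\ratepxy^*(\lambda)$ is finite only for $\lambda \le 0$ (since $\ratepxy(\cdot)$ is nonincreasing — relaxing the type-II constraint cannot hurt the type-I exponent), and that for $\lambda \le 0$ a Lagrangian/minimax computation gives
\begin{IEEEeqnarray}{l}
\ratepxy^*(\lambda) = -\inf_{R_{XY}} \bigl(D(R_{XY}\|P_{XY}) - \lambda D(R_{XY}\|R_X R_Y)\bigr).\IEEEeqnarraynumspace\nonumber
\end{IEEEeqnarray}
The crux is then to show that this inner infimum, for $\lambda = -(1-\alpha)/\alpha$ with $\alpha \in (0,1]$, evaluates to $-\frac{1-\alpha}{\alpha} J_\alpha(P_{XY})$; this is exactly a variational (Donsker–Varadhan–type) characterization of the Rényi dependence measure, which should already be recorded in Section~\ref{sec:preliminaries} or follow from the definition of $J_\alpha$ given there. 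Substituting $\lambda = -(1-\alpha)/\alpha$ and unwinding, $\ratepxy^{**}(\rateqxqy) = \sup_\lambda (\lambda \rateqxqy - \ratepxy^*(\lambda))$ becomes precisely $\sup_{\alpha \in (0,1]} \frac{1-\alpha}{\alpha}(J_\alpha(P_{XY}) - \rateqxqy)$, after checking that the map $\alpha \mapsto -(1-\alpha)/\alpha$ is a bijection from $(0,1]$ onto $(-\infty,0]$ and that the endpoint $\lambda = 0$ (i.e.\ $\alpha = 1$) contributes the term $\frac{0}{1}\cdot(\cdot) = 0$ coming from the unconstrained infimum $\inf_{R_{XY}} D(R_{XY}\|P_{XY}) = 0$, which matches $\ratepxy^*(0)=0$.

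For the ``furthermore'' part, I would invoke the standard fact from convex analysis that for a function $f\colon\reals\to\reals\cup\{+\infty\}$, one has $f^{**} = f$ if and only if $f$ is convex and lower semicontinuous (and proper, or identically $+\infty$). Since $\ratepxy(\cdot)$ is everywhere finite on $\reals$ — indeed $0 \le \ratepxy(\rateqxqy) \le D(\cdot\|P_{XY})$ evaluated at, say, $R_X R_Y$ for a suitable product distribution satisfying the constraint, and in fact $\ratepxy(\rateqxqy)$ is bounded — it is automatically lower semicontinuous wherever it is convex (a finite convex function on $\reals$ is continuous), so $f^{**}=f$ reduces to convexity of $\ratepxy(\cdot)$; and conversely $f^{**}$ is always convex, so $f^{**}=f$ forces $f$ convex. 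One small point to verify is properness/finiteness of $\ratepxy(\cdot)$ on all of $\reals$ so that no degenerate case intervenes; this follows because the constraint set $\{R_{XY}: D(R_{XY}\|R_X R_Y)\le\rateqxqy\}$ is nonempty for every $\rateqxqy \ge 0$ (it contains all product PMFs, for which the divergence is $0$) and also for $\rateqxqy < 0$ makes $\ratepxy(\rateqxqy)=+\infty$... — here I would instead note that $\rateqxqy$ is meant to range so that the relevant comparison is with $\ratepxy^{**}$, and handle the region $\rateqxqy<0$ separately by observing both sides equal $+\infty$ there (the supremum over $\alpha$ blows up as $\alpha\to 0$ since $J_\alpha\ge 0$ and $-\rateqxqy>0$), keeping the statement consistent.

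The main obstacle I anticipate is the exchange-of-infima / minimax step establishing the Lagrangian formula for $\ratepxy^*(\lambda)$: one must justify strong duality for the convex program defining $\ratepxy(\rateqxqy)$. Because $D(R_{XY}\|R_X R_Y)$ is not jointly the sort of function for which Slater's condition is instantly obvious, I would argue via the theory of conjugates of perturbation functions — $\ratepxy(\rateqxqy)$ is itself the optimal-value function of a parametrized convex problem, hence its conjugate is computed by the Lagrangian dual, with no duality gap provided the value function is, e.g., lower semicontinuous, which can be checked using lower semicontinuity and compactness of the relevant divergence sublevel sets on the (compact) simplex $\set{P}(\set{X}\times\set{Y})$. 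The other place requiring care is confirming that the inner infimum of $D(R_{XY}\|P_{XY}) - \lambda D(R_{XY}\|R_X R_Y)$ genuinely equals the claimed Rényi expression; if Section~\ref{sec:preliminaries} defines $J_\alpha$ as $\min_{Q_X,Q_Y} D_\alpha(P_{XY}\|Q_X Q_Y)$ or via a sup over $R_{XY}$, a short Legendre-transform computation (or a direct appeal to a lemma in Section~\ref{sec:preliminaries}) closes the gap.
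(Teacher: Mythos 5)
Your overall route is the same as the paper's: compute $\ratepxy^*(\lambda)$, show it is $+\infty$ for $\lambda>0$ and equals $\lambda\,J_{\frac{1}{1-\lambda}}(P_{XY})$ for $\lambda\le 0$, biconjugate, substitute $\alpha=\frac{1}{1-\lambda}\in(0,1]$, and finish with Fenchel--Moreau. Two remarks on the first part. The duality machinery you worry about (Slater, strong duality for perturbation functions, lsc of the value function) is not needed: since \eqref{eq:defepofeq} writes $\ratepxy(\rateqxqy)$ as an infimum over a constraint parametrized by $\rateqxqy$, the conjugate is obtained by merging the two suprema, $\sup_{\rateqxqy}\sup_{R_{XY}:D(R_{XY}\|R_XR_Y)\le\rateqxqy}[\lambda\rateqxqy-D(R_{XY}\|P_{XY})]$, and, because $\lambda\le 0$, the inner supremum is attained at $\rateqxqy=D(R_{XY}\|R_XR_Y)$; this is an exact identity with no duality gap to rule out. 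The real content of your ``crux'' step is then the passage from $D(R_{XY}\|R_XR_Y)$ (product of the marginals of $R_{XY}$) to a fixed product distribution: one uses Lemma~\ref{lma:drxyrxryledrxyqxqy} to write $D(R_{XY}\|R_XR_Y)=\inf_{Q_X,Q_Y}D(R_{XY}\|Q_XQ_Y)$ (the nonnegative coefficient $\tfrac{-\lambda}{1-\lambda}$ lets you pull the infimum out), and then the variational formula for R\'enyi divergence \cite[Theorem~30]{VanErvenHarremoes} identifies the inner infimum over $R_{XY}$ as $\lambda D_{\frac{1}{1-\lambda}}(P_{XY}\|Q_XQ_Y)$, whence $\lambda J_{\frac{1}{1-\lambda}}(P_{XY})$. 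You only gesture at this identification; it is citable, so I would not call it a gap, but it must be spelled out in this two-step form because $J_\alpha$ is defined with a fixed product $Q_XQ_Y$, not with the marginals of the optimizing $R_{XY}$.

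The genuine gap is in the ``furthermore'' part. $\ratepxy(\cdot)$ is \emph{not} finite on all of $\reals$: for $\rateqxqy<0$ the constraint set in \eqref{eq:defepofeq} is empty and $\ratepxy(\rateqxqy)=+\infty$, so the argument ``a finite convex function on $\reals$ is continuous, hence lsc comes for free from convexity'' does not apply. For extended-real-valued convex functions, convexity does not imply lower semicontinuity at the boundary of the effective domain: the function equal to $+\infty$ on $(-\infty,0)$, to $1$ at $0$, and to $0$ on $(0,\infty)$ is convex but differs from its biconjugate at $0$. Your patch (checking that both sides are $+\infty$ for $\rateqxqy<0$) does not address this; the problematic point is $\rateqxqy=0$, the edge of the effective domain, where you must show $\liminf_{\rateqxqy\downarrow 0}\ratepxy(\rateqxqy)\ge\ratepxy(0)$. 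The paper's proof uses the separate fact that $\ratepxy(\cdot)$ is lower semicontinuous on all of $\reals$ (a compactness/lower-semicontinuity argument on the simplex, independent of convexity, whose details the paper omits); with that fact, Fenchel--Moreau gives ``$\ratepxy=\ratepxy^{**}$ iff $\ratepxy$ convex.'' Without establishing lower semicontinuity, your argument delivers only the ``only if'' direction (since $\ratepxy^{**}$ is always convex), not the ``if'' direction.
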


Our last contribution is Example~\ref{exa:nonconvex}, where $\ratepxy(\cdot)$ is not convex and thus for some $\rateqxqy \in \reals$,
\begin{IEEEeqnarray}{l}
\sup_{\alpha \in (0,1]} \frac{1 - \alpha}{\alpha} \bigl(J_\alpha(P_{XY}) - \rateqxqy\bigr) \ne \ratepxy(\rateqxqy).\IEEEeqnarraynumspace\label{eq:renyineepeq}
\end{IEEEeqnarray}

The rest of this paper is organized as follows:
in Section~\ref{sec:preliminaries}, we review the R\'enyi divergence and the Fenchel conjugation;
in Section~\ref{sec:relatedwork}, we review results on simple and composite hypothesis testing;
in Section~\ref{sec:achievableexponents}, we prove Theorem~\ref{thm:pairachievableiffrxy} and provide asymptotically optimal tests;
in Section~\ref{sec:errorexponentvsrenyi}, we relate $J_\alpha(P_{XY})$ to the Fenchel biconjugates $\ratepxy^{**}$ and $\rateqxqy^{**}$; and
in Section~\ref{sec:examplenotconvex}, we discuss Example~\ref{exa:nonconvex}, where $\ratepxy(\cdot)$ is not convex.
Additional proofs can be found in the \hyperref[sec:appendix]{Appendix}.

\section{Preliminaries}
\label{sec:preliminaries}

Let $P$ and $Q$ be PMFs over a finite set $\set{Z}$.
The relative entropy (or Kullback--Leibler divergence) is defined as
\begin{IEEEeqnarray}{l}
D(P\|Q) \triangleq \sum_{z \in \set{Z}} P(z) \log \frac{P(z)}{Q(z)}\IEEEeqnarraynumspace
\end{IEEEeqnarray}
with the conventions that $0 \log (0/q) = 0$ for all $q \ge 0$ and $p \log (p/0) = +\infty$ for all $p > 0$.
The R\'enyi divergence of order $\alpha$ \cite{RenyiDivergence,VanErvenHarremoes} is defined for all positive $\alpha$ other than 1 as
\begin{IEEEeqnarray}{l}
D_\alpha(P\|Q) \triangleq \frac{1}{\alpha - 1} \log \sum_{z \in \set{Z}} P(z)^\alpha \, Q(z)^{1 - \alpha}\IEEEeqnarraynumspace
\end{IEEEeqnarray}
with the conventions that $\log 0 = -\infty$ and that for $\alpha > 1$, we read $P(z)^\alpha \, Q(z)^{1 - \alpha}$ as $P(z)^\alpha / Q(z)^{\alpha - 1}$ and use $0/0 = 0$ and $p / 0 = +\infty$ for all $p > 0$.
By continuous extension \cite[Theorem~5]{VanErvenHarremoes}, $D_1(P\|Q) \triangleq D(P\|Q)$.

The measure of dependence $J_\alpha(P_{XY})$ \cite{TwoMeasuresOfDependence} is defined as
\begin{IEEEeqnarray}{l}
J_\alpha(P_{XY}) \triangleq \min_{Q_X \in \set{P}(\set{X}),\, Q_Y \in \set{P}(\set{Y})} D_\alpha(P_{XY}\|Q_X Q_Y)\IEEEeqnarraynumspace
\end{IEEEeqnarray}
for all positive $\alpha$ and as zero when $\alpha$ is zero.

The convex conjugate (or Fenchel conjugate) of a function $f\colon \reals \to [-\infty,+\infty]$ is the function $f^*\colon \reals \to [-\infty,+\infty]$,
\begin{IEEEeqnarray}{l}
f^*(\lambda) \triangleq \sup_{x \in \reals} \, [\lambda \, x - f(x)].\IEEEeqnarraynumspace
\end{IEEEeqnarray}
It is lower semicontinuous and convex \cite[Section~7.1 and Proposition~1.2.2]{Bertsekas}.

The (Fenchel) biconjugate of a function $f\colon \reals \to [-\infty,+\infty]$ is $f^{**}\colon \reals \to [-\infty,+\infty]$, the convex conjugate of $f^*$.
For every $f$ and for every $x \in \reals$, $f^{**}(x) \le f(x)$ \cite[Section~4.2]{BorweinLewis}.

We next motivate the strict inequalities in \eqref{eq:deferrexponentpxy} and \eqref{eq:deferrexponentqxqy}.
Let $\ratepxytilde(\rateqxqy)$ denote the error-exponent function that would have resulted had we replaced the strict inequalities in \eqref{eq:deferrexponentpxy} and \eqref{eq:deferrexponentqxqy} with weak inequalities.
Then, $\ratepxytilde(\cdot)$ and $\ratepxytilde^{**}(\cdot)$ cannot be equal because, unlike $\ratepxy(\cdot)$, $\ratepxytilde(\cdot)$ is not lower semicontinuous.
The difference between $\ratepxy(\cdot)$ and $\ratepxytilde(\cdot)$ is best seen at zero:
While $\ratepxytilde(0)$ is $+\infty$, it turns out that $\ratepxy(0)$ is the optimal error exponent if for a fixed $\epsilon \in (0,1)$, we require the tests to satisfy $(Q_X Q_Y)^{\times n}[T_n(X^n,Y^n) = 0] \le \epsilon$ for all $n$ and all $(Q_X,Q_Y)$.
(This setup is similar to the one in Stein's lemma \cite[Corollary~1.2]{CsiszarKorner}; we do not explore it further in this paper.)

To see that \eqref{eq:pxyerrorsmallerexp} and \eqref{eq:qxqyerrorsmallerexp} are not sufficient for the achievability of an error-exponent pair, observe that \eqref{eq:pxyerrorsmallerexp} and \eqref{eq:qxqyerrorsmallerexp} hold for every $\ratepxy \in \reals$ if $\rateqxqy = 0$ and $T_n(X^n,Y^n) = 0$ irrespective of $X^n$ and $Y^n$.
Yet, \eqref{eq:defepofeq} implies that $\ratepxy(0)$ is finite, so $(\ratepxy,0)$ is not achievable for every $\ratepxy$.

We conclude this section with two lemmas.

\begin{Lemma}
\label{lma:drxyrxryledrxyqxqy}
For all $R_{XY}$, $Q_X$, and $Q_Y$,
\begin{IEEEeqnarray}{l}
D(R_{XY}\|Q_X Q_Y) \ge D(R_{XY}\|R_X R_Y).\IEEEeqnarraynumspace
\end{IEEEeqnarray}
Consequently,
\begin{IEEEeqnarray}{l}
\inf_{\qxcommaqylessspace} D(R_{XY}\|Q_X Q_Y) = D(R_{XY}\|R_X R_Y).\IEEEeqnarraynumspace\label{eq:infdrxyqxqy}
\end{IEEEeqnarray}
\end{Lemma}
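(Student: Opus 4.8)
The plan is to prove the inequality $D(R_{XY}\|Q_X Q_Y) \ge D(R_{XY}\|R_X R_Y)$ directly by expanding both relative entropies and exhibiting the difference as a sum of nonnegative terms. First I would write
\begin{IEEEeqnarray}{rCl}
D(R_{XY}\|Q_X Q_Y) - D(R_{XY}\|R_X R_Y) &=& \sum_{x,y} R_{XY}(x,y) \log \frac{R_X(x) R_Y(y)}{Q_X(x) Q_Y(y)} \nonumber\\
&=& \sum_x R_X(x) \log \frac{R_X(x)}{Q_X(x)} + \sum_y R_Y(y) \log \frac{R_Y(y)}{Q_Y(y)} \nonumber\\
&=& D(R_X\|Q_X) + D(R_Y\|Q_Y),
\end{IEEEeqnarray}
where the second equality uses $\sum_y R_{XY}(x,y) = R_X(x)$ and $\sum_x R_{XY}(x,y) = R_Y(y)$ to collapse the double sum into two single sums. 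Both relative entropies on the right are nonnegative by the nonnegativity of relative entropy (Gibbs' inequality, or Jensen applied to $-\log$), which establishes the claimed inequality. One should keep an eye on the degenerate cases: if some $Q_X(x) = 0$ while $R_X(x) > 0$ then both sides are $+\infty$ and the inequality holds trivially, and the conventions $0\log(0/q)=0$ and $p\log(p/0)=+\infty$ make the term-by-term manipulation legitimate whenever the left-hand side is finite.

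For the ``consequently'' part, note that the inequality just proven shows that $D(R_X R_Y \| Q_X Q_Y)$-type penalty is always nonnegative, so $D(R_{XY}\|Q_X Q_Y) \ge D(R_{XY}\|R_X R_Y)$ for every choice of $Q_X, Q_Y$; hence the infimum over $Q_X, Q_Y$ is at least $D(R_{XY}\|R_X R_Y)$. Conversely, the choice $Q_X = R_X$ and $Q_Y = R_Y$ attains $D(R_{XY}\|R_X R_Y)$, so the infimum is at most this value. Combining the two bounds gives \eqref{eq:infdrxyqxqy}, and in fact shows the infimum is a minimum.

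I do not anticipate a genuine obstacle here: the entire content is the additive decomposition in the display above together with nonnegativity of relative entropy. The only place requiring a moment of care is bookkeeping with the $0$ and $+\infty$ conventions so that the rearrangement of sums is valid; this is handled by first disposing of the case where the left-hand side is $+\infty$ (which can only happen if $Q_X(x)Q_Y(y) = 0 < R_{XY}(x,y)$ for some $(x,y)$, forcing $Q_X(x) = 0$ or $Q_Y(y) = 0$ and hence the right-hand side of the decomposition to be $+\infty$ as well) and then working with finite quantities.
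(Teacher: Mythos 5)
Your proposal is correct and follows essentially the same route as the paper: both rest on the decomposition $D(R_{XY}\|Q_X Q_Y) = D(R_{XY}\|R_X R_Y) + D(R_X\|Q_X) + D(R_Y\|Q_Y)$ (you write it as a difference, the paper as an identity) together with nonnegativity of relative entropy and the choice $Q_X = R_X$, $Q_Y = R_Y$ to attain the infimum. Your extra care with the $0$ and $+\infty$ conventions is fine but adds nothing beyond the paper's argument.
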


\begin{proof}
We have
\begin{IEEEeqnarray}{rCl}
\IEEEeqnarraymulticol{3}{l}{D(R_{XY}\|Q_X Q_Y)}\IEEEeqnarraynumspace\nonumber\\*\quad
&=& D(R_{XY}\|R_X R_Y) + D(R_X\|Q_X) + D(R_Y\|Q_Y)\IEEEeqnarraynumspace\\
&\ge& D(R_{XY}\|R_X R_Y),\IEEEeqnarraynumspace\label{eq:drxyrxryledrxyqxqy}
\end{IEEEeqnarray}
where \eqref{eq:drxyrxryledrxyqxqy} holds because $D(P\|Q) \ge 0$.
Equality is achieved for $Q_X = R_X$ and $Q_Y = R_Y$, which proves \eqref{eq:infdrxyqxqy}.
\end{proof}

\begin{Lemma}
\label{lma:infdrqdrpeqsupalpha}
Let $P$ and $Q$ be PMFs over a finite set $\set{Z}$.
Then, for all $\rateqxqy \in \reals$,
\begin{IEEEeqnarray}{l}
\inf_{\substack{R \in \set{P}(\set{Z}):\\D(R\|Q) \le \rateqxqy}} D(R\|P) = \sup_{\alpha \in (0,1]} \frac{1 - \alpha}{\alpha} \bigl(D_\alpha(P\|Q) - \rateqxqy\bigr).\IEEEeqnarraynumspace
\end{IEEEeqnarray}
\end{Lemma}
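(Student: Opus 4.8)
The plan is to prove Lemma~\ref{lma:infdrqdrpeqsupalpha} by establishing the two inequalities separately, treating it as a Lagrangian-duality statement in which the Rényi divergence $D_\alpha(P\|Q)$ plays the role of the dual function. First I would dispose of the trivial cases: if $\rateqxqy < 0$, the constraint set $\{R : D(R\|Q) \le \rateqxqy\}$ is empty (since $D(R\|Q) \ge 0$), so the left side is $+\infty$ by convention, while on the right side, letting $\alpha \downarrow 0$ makes $\frac{1-\alpha}{\alpha} \to +\infty$ and $D_\alpha(P\|Q) \to D_0(P\|Q) \ge 0 > \rateqxqy$, so the supremum is also $+\infty$; if $\rateqxqy = 0$, the constraint forces $R = Q$ and the left side is $D(Q\|P)$, which should match the right side directly. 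So assume $\rateqxqy > 0$.

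For the direction ``$\ge$'' (the right side bounds the left from below), I would use a clean pointwise argument: for any $R$ with $D(R\|Q) \le \rateqxqy$ and any $\alpha \in (0,1]$, I want to show $D(R\|P) \ge \frac{1-\alpha}{\alpha}(D_\alpha(P\|Q) - \rateqxqy)$. The key identity is the variational/Donsker--Varadhan-type relation: for $\alpha \in (0,1)$, one has $\frac{1-\alpha}{\alpha} D_\alpha(P\|Q) = \min_{R} \bigl[ D(R\|P) + \frac{1-\alpha}{\alpha} D(R\|Q) \bigr]$, which follows by a direct computation (differentiating over the simplex, or recognizing the minimizer as the tilted distribution $R^*(z) \propto P(z)^\alpha Q(z)^{1-\alpha}$). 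Granting this, for any feasible $R$, $D(R\|P) \ge \frac{1-\alpha}{\alpha} D_\alpha(P\|Q) - \frac{1-\alpha}{\alpha} D(R\|Q) \ge \frac{1-\alpha}{\alpha} D_\alpha(P\|Q) - \frac{1-\alpha}{\alpha} \rateqxqy$; taking the infimum over feasible $R$ and the supremum over $\alpha$ gives ``$\ge$''. (The case $\alpha = 1$ reads $D(R\|P) \ge D_1(P\|Q) - \rateqxqy$ times a zero coefficient, hence is vacuous, or is handled by continuity.)

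For the reverse direction ``$\le$'', I would argue that the bound from the best $\alpha$ is actually attained. There are two natural routes. The first is to exhibit, for the optimal $\alpha = \alpha^*$, the tilted distribution $R^*(z) = P(z)^{\alpha^*} Q(z)^{1-\alpha^*} / \sum_w P(w)^{\alpha^*} Q(w)^{1-\alpha^*}$ and verify that it is feasible (i.e.\ $D(R^*\|Q) \le \rateqxqy$) and that $D(R^*\|P)$ equals the right-hand side; as $\alpha^*$ ranges over $(0,1]$, $D(R^*\|Q)$ ranges continuously from $D(Q\|Q)=0$ up to $D(P\|Q) = D_1(P\|Q)$, so for every $\rateqxqy$ in that range there is a matching $\alpha^*$, and for $\rateqxqy \ge D(P\|Q)$ the unconstrained minimum $R=P$ is already feasible (both sides are $0$, taking $\alpha^*=1$). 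The second, cleaner route is to invoke convex-duality machinery: the function $g(\rateqxqy) \triangleq \inf_{D(R\|Q)\le \rateqxqy} D(R\|P)$ is convex and nonincreasing in $\rateqxqy$ (as an infimal projection of a jointly convex function), and its Fenchel conjugate can be computed as $g^*(-\mu) = \sup_{\rateqxqy}[-\mu \rateqxqy - g(\rateqxqy)]$; recognizing that this equals $-\inf_R[D(R\|P)+\mu D(R\|Q)] = -\frac{\mu}{1+\mu}\cdot\frac{1}{\mu}\cdots$ — i.e.\ it matches the Rényi expression under the substitution $\mu = \frac{1-\alpha}{\alpha}$, $\alpha = \frac{1}{1+\mu}$ — and that $g$ is already closed convex, one concludes $g = g^{**}$ equals the claimed supremum.

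The main obstacle I anticipate is the boundary bookkeeping for $\alpha$ at the two ends of $(0,1]$ together with the infinite values: making sure that $D_0(P\|Q) = -\log \sum_{z: P(z)>0} Q(z)$ and $D_1(P\|Q) = D(P\|Q)$ are the correct continuous limits, that the case $D(P\|Q) = +\infty$ (some $Q(z) = 0$ with $P(z) > 0$) is handled — then one should check that the right side is $+\infty$ for every $\rateqxqy$, matching the left side since no feasible $R$ can have $D(R\|P) < \infty$ while being absolutely continuous w.r.t.\ $Q$ in the required way — and that the supremum over the half-open interval $(0,1]$ is genuinely attained or approached. The interior optimization (the tilted-distribution identity) is routine Lagrange-multiplier calculus on the simplex and should not pose difficulties; I would likely relegate it to a one-line computation or cite it as standard.
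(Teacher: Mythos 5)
The paper itself omits the proof of this lemma (it is essentially the Hoeffding--Csisz\'ar--Longo--Blahut result recalled in Section~\ref{sec:relatedwork}), so there is no in-paper argument to compare against; judging your proposal on its own terms, the core of it is right. Your ``$\ge$'' direction is correct: the identity $\min_R\bigl[D(R\|P)+\tfrac{1-\alpha}{\alpha}D(R\|Q)\bigr]=\tfrac{1-\alpha}{\alpha}D_\alpha(P\|Q)$ is exactly \cite[Theorem~30]{VanErvenHarremoes}, the same fact the paper invokes in Lemma~\ref{lma:epconjugate}, and your chain of bounds followed by $\inf_R$ and $\sup_\alpha$ is fine. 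Your second route for ``$\le$'' (biconjugation) is also sound and in fact mirrors the paper's Lemma~\ref{lma:epconjugate}/Theorem~\ref{thm:renyiinfvsratepxy} computation, provided you actually verify that $g(\rateqxqy)\triangleq\inf_{R:\,D(R\|Q)\le\rateqxqy}D(R\|P)$ is convex (infimal projection over the convex set $\{(R,\rateqxqy):D(R\|Q)\le\rateqxqy\}$, which uses convexity of $D(\cdot\|Q)$ --- precisely what fails for the constraint in \eqref{eq:defepofeq}, which is why $\ratepxy(\cdot)$ can be nonconvex while $g$ cannot) and lower semicontinuous (compactness of the sublevel sets of $D(\cdot\|Q)$ plus lower semicontinuity of $D(\cdot\|P)$); neither check is hard, but they are the substance of this route and should not be waved off.

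The genuine flaw is in your first route for ``$\le$'' and in your proposed handling of $D(P\|Q)=+\infty$. It is not true that $P\not\ll Q$ forces both sides to be $+\infty$: take $\set{Z}=\{1,2\}$, $P=(\tfrac12,\tfrac12)$, $Q=(1,0)$, and $\rateqxqy>0$; every feasible $R$ must be supported on $\{1\}$, so the left side equals $\log 2$, while $D_\alpha(P\|Q)=\tfrac{\alpha}{1-\alpha}\log 2$ gives a right side of $\sup_{\alpha\in(0,1)}\bigl[\log 2-\tfrac{1-\alpha}{\alpha}\rateqxqy\bigr]=\log 2$, approached but not attained as $\alpha\uparrow 1$. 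The same example shows your attainment argument breaks: the tilted family $R^*_\alpha\propto P^\alpha Q^{1-\alpha}$ is constant (the point mass at $1$), so $D(R^*_\alpha\|Q)$ does not sweep continuously from $0$ up to $D(P\|Q)$, and no interior $\alpha^*$ matches the constraint. More generally, when $Q(\operatorname{supp}P)<1$ the lower end of the sweep is $D_0(P\|Q)$, not $0$, and for $0\le\rateqxqy<D_0(P\|Q)$ both sides are $+\infty$ only via the limit $\alpha\downarrow 0$. So route one needs explicit limiting arguments at both ends of $(0,1]$ and a corrected treatment of $P\not\ll Q$; alternatively, drop it and carry route two through, which handles all cases uniformly.
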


\begin{proof}
Omitted.
\end{proof}

\section{Related Work}
\label{sec:relatedwork}

Let $P$ and $Q$ be PMFs over a finite set $\set{Z}$.
In the simple hypothesis testing setup where one has to guess whether $\{Z_i\}_{i=1}^n$ are IID according to $P$ or $Q$, Hoeffding \cite{Hoeffding} and Csisz\'ar and Longo \cite{CsiszarLongo} essentially showed that
\begin{IEEEeqnarray}{l}
\ratepxytilde(\rateqxqy) = \inf_{R \in \set{P}(\set{Z}):\,D(R\|Q) \le \rateqxqy} D(R\|P),\IEEEeqnarraynumspace
\end{IEEEeqnarray}
where $\ratepxytilde(\cdot)$ is the error-exponent function for the simple hypothesis testing setup.
More properties of $\ratepxytilde(\cdot)$ were studied by Blahut \cite{Blahut}; relevant for us is
\begin{IEEEeqnarray}{l}
\ratepxytilde(\rateqxqy) = \sup_{\alpha \in (0,1]} \frac{1 - \alpha}{\alpha} \bigl(D_\alpha(P\|Q) - \rateqxqy\bigr),\IEEEeqnarraynumspace
\end{IEEEeqnarray}
which follows from \cite[Theorem~7]{Blahut} by substituting $\alpha = \smash{\frac{1}{1 + s}}$ and identifying the R\'enyi divergence.

In the composite hypothesis testing setup where $P$ is tested against an unknown $Q$ from some set $\set{Q}$, Hoeffding \cite{Hoeffding} showed that his likelihood-ratio test is asymptotically optimal against all $Q \in \set{Q}$; see also \cite[Problem~2.13(b)]{CsiszarKorner}.
This test statistic is used in Lemma~\ref{lma:hoeffdingtest}.

For the hypothesis testing setup of this paper, Tomamichel and Hayashi \cite[first part of (57)]{TomamichelHayashi} showed that for sufficiently large $\rateqxqy$,
\begin{IEEEeqnarray}{l}
\sup_{\alpha \in (\frac{1}{2},1)} \frac{1 - \alpha}{\alpha} \bigl(J_\alpha(P_{XY}) - \rateqxqy\bigr) = \ratepxy(\rateqxqy).\IEEEeqnarraynumspace\label{eq:tomamichelhayashiepeq}
\end{IEEEeqnarray}
We provide a negative answer to the question at the end of the paragraph in \cite{TomamichelHayashi}: an equality of the form \eqref{eq:tomamichelhayashiepeq} does not hold in general because the LHS of \eqref{eq:tomamichelhayashiepeq} is always convex in $\rateqxqy$, but $\ratepxy(\cdot)$ from Example~\ref{exa:nonconvex} is not convex.

Conditions for which the generalized likelihood-ratio test is asymptotically optimal in a Neyman--Pearson sense are studied in \cite{GLRT}.
A different approach to composite hypothesis testing has been proposed in \cite{CompositeHypothesisMinimax}.

Independence testing is a related setup where one wants to know whether or not the PMF generating $\{(X_i,Y_i)\}_{i=1}^n$ has a product form (whereas here, we test a fixed $P_{XY}$ against an unknown product distribution).
Since the empirical mutual information in Lemma~\ref{lma:mutualinformationtest} does not depend on $P_{XY}$, it can also be used for independence testing; see for example \cite[``$G$-test of independence'']{SokalRohlf}, where $G$ is $2n$ times the empirical mutual information.

\section{Achievable Error-Exponent Pairs}
\label{sec:achievableexponents}

After two preparatory lemmas, we present in Lemmas~\ref{lma:mutualinformationtest}--\ref{lma:glrt} three tests that achieve any error-exponent pair $(\ratepxy,\rateqxqy)$ for which
\begin{IEEEeqnarray}{l}
\bigl(D(R_{XY}\|P_{XY}) > \ratepxy\bigr) \spacelor \bigl(D(R_{XY}\|R_X R_Y) > \rateqxqy\bigr)\IEEEeqnarraynumspace\label{eq:pairachievrxy}
\end{IEEEeqnarray}
holds for all $R_{XY} \in \set{P}(\set{X} \times \set{Y})$.
These tests are all based on the type \cite{CsiszarKorner} $\hat{R}_{XY}$ of the sequence $\{(X_i,Y_i)\}_{i=1}^n$.
The asymptotic optimality of these tests follows from the converse proved in Lemma~\ref{lma:exponentnotachievable}, which establishes Theorem~\ref{thm:pairachievableiffrxy} and Corollary~\ref{cor:expfixedexpoptimized}.

\begin{Lemma}
\label{lma:rxystrictimpliesepsilon}
If \eqref{eq:pairachievrxy} holds for all $R_{XY} \in \set{P}(\set{X} \times \set{Y})$, then there exists an $\epsilon > 0$ such that for all $R_{XY} \in \set{P}(\set{X} \times \set{Y})$,
\begin{IEEEeqnarray}{l}
\bigl(D(R_{XY}\|P_{XY}) \ge \ratepxy + \epsilon\bigr) \spacelor \bigl(D(R_{XY}\|R_X R_Y) \ge \rateqxqy + \epsilon\bigr).\nonumber\\*\IEEEeqnarraynumspace\label{eq:lmarxyepsepsilon}
\end{IEEEeqnarray}
\end{Lemma}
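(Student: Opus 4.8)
The plan is to exploit compactness of the simplex $\set{P}(\set{X}\times\set{Y})$ together with the fact that the two divergence functions appearing in \eqref{eq:pairachievrxy} are lower semicontinuous in $R_{XY}$. The hypothesis says that for every $R_{XY}$ at least one of the two strict inequalities holds; I want to upgrade this to a \emph{uniform} slack $\epsilon>0$ that works simultaneously for all $R_{XY}$. The natural device is to define, for each $R_{XY}$,
\begin{IEEEeqnarray}{l}
g(R_{XY}) \triangleq \max\bigl\{D(R_{XY}\|P_{XY}) - \ratepxy,\ D(R_{XY}\|R_X R_Y) - \rateqxqy\bigr\},\nonumber
\end{IEEEeqnarray}
so that \eqref{eq:pairachievrxy} is exactly the statement $g(R_{XY})>0$ for all $R_{XY}$, and \eqref{eq:lmarxyepsepsilon} is the statement $g(R_{XY})\ge\epsilon$ for all $R_{XY}$. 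It then suffices to show $\inf_{R_{XY}} g(R_{XY}) > 0$.

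First I would establish that $g$ is lower semicontinuous on the compact set $\set{P}(\set{X}\times\set{Y})$. The map $R_{XY}\mapsto D(R_{XY}\|P_{XY})$ is lower semicontinuous (indeed it is continuous on the region where it is finite, and jumps to $+\infty$ only as $R_{XY}$ approaches a point not absolutely continuous with respect to $P_{XY}$, which is a closed condition). Likewise $R_{XY}\mapsto D(R_{XY}\|R_X R_Y)$ is lower semicontinuous; in fact it equals the mutual information $I(R_{XY})$, which is a continuous (even bounded, since $\set{X},\set{Y}$ are finite) function of $R_{XY}$ on the whole simplex. A maximum of finitely many lower semicontinuous functions is lower semicontinuous, so $g$ is lower semicontinuous. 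Then I would invoke the standard fact that a lower semicontinuous function on a nonempty compact set attains its infimum: there exists $R_{XY}^\star$ with $g(R_{XY}^\star) = \inf_{R_{XY}} g(R_{XY})$.

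Finally, since \eqref{eq:pairachievrxy} gives $g(R_{XY}^\star)>0$, setting $\epsilon \triangleq g(R_{XY}^\star) > 0$ yields $g(R_{XY}) \ge \epsilon$ for all $R_{XY}$, which is precisely \eqref{eq:lmarxyepsepsilon}. \textbf{The main obstacle} I anticipate is the careful handling of the $+\infty$ values: one must make sure that lower semicontinuity of $D(\cdot\|P_{XY})$ is argued correctly when $P_{XY}$ has zeros (so that the domain of finiteness is a proper face of the simplex and not all of it), and that "attains its infimum" is applied in the extended-real-valued sense — but none of this is deep, it is just the usual care needed for relative entropy on a finite alphabet. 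An alternative to the compactness argument, should one prefer an explicit bound, is to note that $g$ is continuous wherever it is finite and bounded below away from the (closed) set where $D(\cdot\|P_{XY})=+\infty$; but the compactness-plus-lower-semicontinuity route is cleaner and I would present that.
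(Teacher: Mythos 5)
Your proposal is correct and follows essentially the same route as the paper: both define the pointwise maximum of the two divergence gaps, use lower semicontinuity of that function on the compact simplex to conclude the infimum is attained, and deduce it must be strictly positive (the paper phrases this as a contradiction from assuming the infimum is $\le 0$, you take the attained value directly as $\epsilon$, which is the same argument).
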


\begin{proof}
Define the function $f\colon \set{P}(\set{X} \times \set{Y}) \to \reals \cup \{+\infty\}$,
\begin{IEEEeqnarray}{rl}
R_{XY} \mapsto \max \bigl\{&D(R_{XY}\|P_{XY}) - \ratepxy,\IEEEeqnarraynumspace\nonumber\\*
&D(R_{XY}\|R_X R_Y) - \rateqxqy\bigr\}.\IEEEeqnarraynumspace
\end{IEEEeqnarray}
Suppose that \eqref{eq:pairachievrxy} holds for all $R_{XY} \in \set{P}(\set{X} \times \set{Y})$, and consider
\begin{IEEEeqnarray}{l}
\eta \triangleq \inf_{R_{XY} \in \set{P}(\set{X} \times \set{Y})} f(R_{XY}).\IEEEeqnarraynumspace
\end{IEEEeqnarray}
If $\eta > 0$, then \eqref{eq:lmarxyepsepsilon} holds with $\epsilon = \eta$.
We show by contradiction that $\eta \le 0$ is impossible.
Assume $\eta \le 0$.
Observe that $f$ is lower semicontinuous on $\set{P}(\set{X} \times \set{Y})$ and that $\set{P}(\set{X} \times \set{Y})$ is a compact set.
By the extreme value theorem, there would exist an $R_{XY}^* \in \set{P}(\set{X} \times \set{Y})$ with $f(R_{XY}^*) = \eta \le 0$.
This leads to a contradiction because then \eqref{eq:pairachievrxy} would not hold for $R_{XY}^*$.
\end{proof}

\begin{Lemma}
\label{lma:achievablesanov}
Let $\ratepxy$, $\rateqxqy$, and $\epsilon > 0$ be such that \eqref{eq:lmarxyepsepsilon} holds for all $R_{XY} \in \set{P}(\set{X} \times \set{Y})$.
Define $\tau \triangleq (n + 1)^{\card{\set{X} \times \set{Y}}}$.
Then, for all $Q_X \in \set{P}(\set{X})$ and all $Q_Y \in \set{P}(\set{Y})$,
\begin{IEEEeqnarray}{rCl}
P_{XY}^{\times n}[D(\hat{R}_{XY}\|P_{XY}) \ge \ratepxy + \epsilon] &\le& \tau \, e^{-n (\ratepxy + \epsilon)},\hspace{-0.09em}\IEEEeqnarraynumspace\label{eq:achsanova}\\
P_{XY}^{\times n}[D(\hat{R}_{XY}\|\hat{R}_X \hat{R}_Y) < \rateqxqy + \epsilon] &\le& \tau \, e^{-n (\ratepxy + \epsilon)},\hspace{-0.09em}\IEEEeqnarraynumspace\label{eq:achsanovb}\\
(Q_X Q_Y)^{\times n}[D(\hat{R}_{XY}\|P_{XY}) < \ratepxy + \epsilon] &\le& \tau \, e^{-n (\rateqxqy + \epsilon)},\hspace{-0.09em}\IEEEeqnarraynumspace\label{eq:achsanovc}\\
(Q_X Q_Y)^{\times n}[D(\hat{R}_{XY}\|\hat{R}_X \hat{R}_Y) \ge \rateqxqy + \epsilon] &\le& \tau \, e^{-n (\rateqxqy + \epsilon)}.\hspace{-0.09em}\IEEEeqnarraynumspace\label{eq:achsanovd}
\end{IEEEeqnarray}
\end{Lemma}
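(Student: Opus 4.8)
The plan is to combine the method of types with the disjunction guaranteed by \eqref{eq:lmarxyepsepsilon}. Two standard facts suffice. First, the number of types of denominator $n$ over $\set{X} \times \set{Y}$ is at most $\tau = (n+1)^{\card{\set{X} \times \set{Y}}}$. Second, if $R$ is such a type and $S_{XY}$ is any PMF over $\set{X} \times \set{Y}$, then the $S_{XY}^{\times n}$-probability that the empirical type $\hat{R}_{XY}$ equals $R$ is at most $e^{-n D(R\|S_{XY})}$ (the type-class bound; see \cite{CsiszarKorner}). Since $\hat{R}_{XY}$ determines which type class the sample lies in, each of the four events in the statement is a disjoint union of at most $\tau$ type classes; hence it is enough to bound the probability of a single admissible type class by the target exponential and then multiply by $\tau$.

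For \eqref{eq:achsanova}: each type $R$ with $D(R\|P_{XY}) \ge \ratepxy + \epsilon$ has type-class probability at most $e^{-n D(R\|P_{XY})} \le e^{-n(\ratepxy + \epsilon)}$ under $P_{XY}^{\times n}$, and summing over the at most $\tau$ such types gives the claim. For \eqref{eq:achsanovd}: for any type $R$ with $D(R\|R_X R_Y) \ge \rateqxqy + \epsilon$, Lemma~\ref{lma:drxyrxryledrxyqxqy} yields $D(R\|Q_X Q_Y) \ge D(R\|R_X R_Y) \ge \rateqxqy + \epsilon$ for \emph{every} $Q_X,Q_Y$, so its type-class probability under $(Q_X Q_Y)^{\times n}$ is at most $e^{-n(\rateqxqy + \epsilon)}$; multiplying by $\tau$ gives the claim uniformly in $(Q_X,Q_Y)$.

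For \eqref{eq:achsanovb} and \eqref{eq:achsanovc} I additionally invoke \eqref{eq:lmarxyepsepsilon} with $R_{XY}$ set equal to the relevant type, which is legitimate because \eqref{eq:lmarxyepsepsilon} holds for every $R_{XY} \in \set{P}(\set{X} \times \set{Y})$ and every type lies in this set. For \eqref{eq:achsanovb}: if a type $R$ satisfies $D(R\|R_X R_Y) < \rateqxqy + \epsilon$, then the second disjunct of \eqref{eq:lmarxyepsepsilon} fails, so $D(R\|P_{XY}) \ge \ratepxy + \epsilon$; hence its type-class probability under $P_{XY}^{\times n}$ is at most $e^{-n(\ratepxy + \epsilon)}$, and summing over at most $\tau$ such types gives \eqref{eq:achsanovb}. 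For \eqref{eq:achsanovc}: if a type $R$ satisfies $D(R\|P_{XY}) < \ratepxy + \epsilon$, then the first disjunct fails, so $D(R\|R_X R_Y) \ge \rateqxqy + \epsilon$, and then, exactly as for \eqref{eq:achsanovd}, $D(R\|Q_X Q_Y) \ge \rateqxqy + \epsilon$, whence the type-class probability under $(Q_X Q_Y)^{\times n}$ is at most $e^{-n(\rateqxqy + \epsilon)}$; multiplying by $\tau$ gives \eqref{eq:achsanovc}.

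I expect no genuine obstacle: the argument is bookkeeping around the method of types. The two points that require care are the direction of the type-class inequality (it is the divergence to the \emph{generating} distribution that appears in the exponent) and, in \eqref{eq:achsanovb}--\eqref{eq:achsanovc}, reading off correctly which disjunct of \eqref{eq:lmarxyepsepsilon} must then hold once the stated strict inequality on the empirical divergence has ruled out the other. Uniformity over $(Q_X,Q_Y)$ is automatic, since the inequality $D(R\|Q_X Q_Y) \ge D(R\|R_X R_Y)$ from Lemma~\ref{lma:drxyrxryledrxyqxqy} holds for all $Q_X$ and $Q_Y$.
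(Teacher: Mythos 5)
Your proof is correct and follows essentially the same route as the paper: it applies the same combination of Lemma~\ref{lma:drxyrxryledrxyqxqy} and the disjunction \eqref{eq:lmarxyepsepsilon} to lower-bound the relevant divergences, the only difference being that you inline the method-of-types argument (at most $\tau$ types, each type class having probability at most $e^{-nD(R\|S)}$) where the paper simply cites the finite-$n$ form of Sanov's theorem. No gap.
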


\begin{proof}
In the \hyperlink{prf:achievablesanov}{Appendix}.
\end{proof}

\begin{Lemma}[Empirical Mutual Information Test]
\label{lma:mutualinformationtest}
If \eqref{eq:pairachievrxy} is satisfied for all $R_{XY} \in \set{P}(\set{X} \times \set{Y})$, then there exists an $\epsilon > 0$ such that the error-exponent pair $(\ratepxy,\rateqxqy)$ is achieved by the sequence of tests
\begin{IEEEeqnarray}{l}
T_n(\hat{R}_{XY}) \triangleq \begin{cases}1 & \text{if } D(\hat{R}_{XY}\|\hat{R}_X \hat{R}_Y) < \rateqxqy + \epsilon,\\
0 & \text{otherwise.}\end{cases}\IEEEeqnarraynumspace
\end{IEEEeqnarray}
\end{Lemma}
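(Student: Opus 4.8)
The plan is to extract a uniform gap $\epsilon > 0$ from Lemma~\ref{lma:rxystrictimpliesepsilon} and then verify the two achievability conditions \eqref{eq:deferrexponentpxy} and \eqref{eq:deferrexponentqxqy} by reading them off the Sanov-type estimates of Lemma~\ref{lma:achievablesanov}. The key observation is that the statistic of the empirical mutual information test is precisely $D(\hat{R}_{XY}\|\hat{R}_X\hat{R}_Y)$, which is exactly the quantity appearing in the bounds \eqref{eq:achsanovb} and \eqref{eq:achsanovd}; of the four estimates collected in Lemma~\ref{lma:achievablesanov}, only these two are needed here (the other two, \eqref{eq:achsanova} and \eqref{eq:achsanovc}, are tailored to the Hoeffding statistic $D(\hat{R}_{XY}\|P_{XY})$ of Lemma~\ref{lma:hoeffdingtest}). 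Given Lemma~\ref{lma:achievablesanov}, the proof is thus little more than matching decision regions to events and discarding subexponential factors.

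First, since \eqref{eq:pairachievrxy} holds for all $R_{XY} \in \set{P}(\set{X} \times \set{Y})$, Lemma~\ref{lma:rxystrictimpliesepsilon} supplies an $\epsilon > 0$ for which \eqref{eq:lmarxyepsepsilon} holds for every $R_{XY}$; this is the $\epsilon$ in the definition of $T_n$, and with this choice Lemma~\ref{lma:achievablesanov} is available. The test $T_n$ is well defined because the type $\hat{R}_{XY}$ of $(X^n,Y^n)$ always exists and $D(\hat{R}_{XY}\|\hat{R}_X\hat{R}_Y)$ is finite on finite alphabets. Write $\tau \triangleq (n + 1)^{\card{\set{X} \times \set{Y}}}$ and recall $\tfrac{1}{n}\log\tau \to 0$, since the number of types is polynomial in $n$.

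For the type~I error, the event $\{T_n = 1\}$ is exactly $\{D(\hat{R}_{XY}\|\hat{R}_X\hat{R}_Y) < \rateqxqy + \epsilon\}$, so \eqref{eq:achsanovb} gives $P_{XY}^{\times n}[T_n = 1] \le \tau\, e^{-n(\ratepxy + \epsilon)}$. Applying $-\tfrac{1}{n}\log(\cdot)$ yields
\begin{IEEEeqnarray*}{l}
\liminf_{n \to \infty} -\frac{1}{n} \log P_{XY}^{\times n}[T_n = 1] \ge \ratepxy + \epsilon > \ratepxy,
\end{IEEEeqnarray*}
which is \eqref{eq:deferrexponentpxy}. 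For the type~II error, the event $\{T_n = 0\}$ is exactly $\{D(\hat{R}_{XY}\|\hat{R}_X\hat{R}_Y) \ge \rateqxqy + \epsilon\}$, so \eqref{eq:achsanovd} gives, for \emph{every} $Q_X$ and $Q_Y$, $(Q_X Q_Y)^{\times n}[T_n = 0] \le \tau\, e^{-n(\rateqxqy + \epsilon)}$; crucially, this bound is uniform in $(Q_X,Q_Y)$, so the infimum over $(Q_X,Q_Y)$ is handled in one stroke, and $-\tfrac{1}{n}\log(\cdot)$ gives
\begin{IEEEeqnarray*}{l}
\liminf_{n \to \infty} \inf_{\qxcommaqylessspace} -\frac{1}{n} \log (Q_X Q_Y)^{\times n}[T_n = 0] \ge \rateqxqy + \epsilon > \rateqxqy,
\end{IEEEeqnarray*}
which is \eqref{eq:deferrexponentqxqy}. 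Hence $(\ratepxy,\rateqxqy)$ is achieved by $\{T_n\}_{n=1}^\infty$.

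The real difficulty has been deliberately pushed into Lemma~\ref{lma:achievablesanov}, and within it into \eqref{eq:achsanovb}: note that its right-hand exponent is $\ratepxy + \epsilon$ even though the event $\{D(\hat{R}_{XY}\|\hat{R}_X\hat{R}_Y) < \rateqxqy + \epsilon\}$ is phrased through the other threshold. This is exactly where the disjunction in \eqref{eq:lmarxyepsepsilon} is used: a type $R_{XY}$ in that event has $D(R_{XY}\|R_X R_Y) < \rateqxqy + \epsilon$, hence by \eqref{eq:lmarxyepsepsilon} must satisfy $D(R_{XY}\|P_{XY}) \ge \ratepxy + \epsilon$, so Sanov's theorem---together with the bound $\tau$ on the number of type classes---controls the $P_{XY}^{\times n}$-probability of their union by $\tau\, e^{-n(\ratepxy + \epsilon)}$. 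Thus no convexity, continuity, or optimization over $R_{XY}$ is needed at this stage; all of that has been confined to Lemmas~\ref{lma:rxystrictimpliesepsilon} and \ref{lma:achievablesanov}.
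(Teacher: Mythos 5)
Your proof is correct and follows essentially the same route as the paper's: take the $\epsilon$ from Lemma~\ref{lma:rxystrictimpliesepsilon}, bound the type~I error via \eqref{eq:achsanovb} and the type~II error uniformly in $(Q_X,Q_Y)$ via \eqref{eq:achsanovd}, and absorb the polynomial factor $\tau$ in the exponent. The paper merely compresses the type~II half into ``similarly,'' whereas you spell out the uniformity over $(Q_X,Q_Y)$ explicitly; there is no substantive difference.
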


\begin{proof}
Use the $\epsilon > 0$ from Lemma~\ref{lma:rxystrictimpliesepsilon}.
Then, the sequence of tests $\{T_n\}_{n=1}^\infty$ satisfies \eqref{eq:deferrexponentpxy} because
\begin{IEEEeqnarray}{rCl}
\IEEEeqnarraymulticol{3}{l}{P_{XY}^{\times n}[T_n(X^n,Y^n) = 1]}\IEEEeqnarraynumspace\nonumber\\*\qquad
&=& P_{XY}^{\times n}[D(\hat{R}_{XY}\|\hat{R}_X \hat{R}_Y) < \rateqxqy + \epsilon]\IEEEeqnarraynumspace\\
&\le& (n + 1)^{\card{\set{X} \times \set{Y}}} \cdot e^{-n (\ratepxy + \epsilon)},\IEEEeqnarraynumspace\label{eq:mitestb}
\end{IEEEeqnarray}
where \eqref{eq:mitestb} follows from Lemma~\ref{lma:achievablesanov}.
Similarly, the sequence of tests $\{T_n\}_{n=1}^\infty$ satisfies \eqref{eq:deferrexponentqxqy}.
\end{proof}

\begin{Lemma}[Hoeffding's Test \cite{Hoeffding}]
\label{lma:hoeffdingtest}
If \eqref{eq:pairachievrxy} is satisfied for all $R_{XY} \in \set{P}(\set{X} \times \set{Y})$, then there exists an $\epsilon > 0$ such that the error-exponent pair $(\ratepxy,\rateqxqy)$ is achieved by the sequence of tests
\begin{IEEEeqnarray}{l}
T_n(\hat{R}_{XY}) \triangleq \begin{cases}0 & \text{if } D(\hat{R}_{XY}\|P_{XY}) < \ratepxy + \epsilon,\\
1 & \text{otherwise.}\end{cases}\IEEEeqnarraynumspace
\end{IEEEeqnarray}
\end{Lemma}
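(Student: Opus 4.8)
The plan is to follow the same template as the proof of Lemma~\ref{lma:mutualinformationtest}, using the two Sanov-type bounds of Lemma~\ref{lma:achievablesanov} that involve $D(\hat{R}_{XY}\|P_{XY})$ rather than $D(\hat{R}_{XY}\|\hat{R}_X\hat{R}_Y)$, since the Hoeffding test depends on the type only through $D(\hat{R}_{XY}\|P_{XY})$. First I would invoke Lemma~\ref{lma:rxystrictimpliesepsilon} on the hypothesis \eqref{eq:pairachievrxy} to obtain an $\epsilon > 0$ for which \eqref{eq:lmarxyepsepsilon} holds for all $R_{XY} \in \set{P}(\set{X} \times \set{Y})$; this $\epsilon$ is the one in the statement, and with it the hypotheses of Lemma~\ref{lma:achievablesanov} are satisfied.

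For condition \eqref{eq:deferrexponentpxy}, the event $\{T_n = 1\}$ is exactly $\{D(\hat{R}_{XY}\|P_{XY}) \ge \ratepxy + \epsilon\}$, so \eqref{eq:achsanova} gives $P_{XY}^{\times n}[T_n(X^n,Y^n) = 1] \le (n+1)^{\card{\set{X} \times \set{Y}}} e^{-n(\ratepxy + \epsilon)}$. Taking $-\tfrac{1}{n}\log(\cdot)$ and letting $n \to \infty$, the polynomial prefactor does not affect the exponent and the $\liminf$ is at least $\ratepxy + \epsilon > \ratepxy$. For condition \eqref{eq:deferrexponentqxqy}, the event $\{T_n = 0\}$ is exactly $\{D(\hat{R}_{XY}\|P_{XY}) < \ratepxy + \epsilon\}$, so \eqref{eq:achsanovc} gives, for every $Q_X \in \set{P}(\set{X})$ and $Q_Y \in \set{P}(\set{Y})$, $(Q_X Q_Y)^{\times n}[T_n(X^n,Y^n) = 0] \le (n+1)^{\card{\set{X} \times \set{Y}}} e^{-n(\rateqxqy + \epsilon)}$. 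Since this bound is uniform in $(Q_X,Q_Y)$, it also bounds the infimum over $(Q_X,Q_Y)$ in \eqref{eq:deferrexponentqxqy}, and the same polynomial-versus-exponential argument yields a $\liminf$ of at least $\rateqxqy + \epsilon > \rateqxqy$. Hence $\{T_n\}_{n=1}^\infty$ achieves $(\ratepxy,\rateqxqy)$.

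I expect no real obstacle here: the heavy lifting is already in Lemma~\ref{lma:achievablesanov}, whose four bounds were tailored so that each of the tests in Lemmas~\ref{lma:mutualinformationtest}--\ref{lma:glrt} follows by selecting the two relevant ones. The only points needing care are matching the inequality directions in the definition of $T_n$ to the events bounded in \eqref{eq:achsanova} and \eqref{eq:achsanovc}, and observing that the $(n+1)^{\card{\set{X} \times \set{Y}}}$ prefactor leaves the exponent unchanged, so both strict inequalities in \eqref{eq:deferrexponentpxy}--\eqref{eq:deferrexponentqxqy} hold with margin $\epsilon$.
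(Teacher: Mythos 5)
Your proposal is correct and matches the paper's intended argument: the paper simply notes that the proof is "very similar to the proof of Lemma~\ref{lma:mutualinformationtest}," i.e., take the $\epsilon$ from Lemma~\ref{lma:rxystrictimpliesepsilon} and apply the bounds \eqref{eq:achsanova} and \eqref{eq:achsanovc} of Lemma~\ref{lma:achievablesanov} to the events $\{T_n=1\}$ and $\{T_n=0\}$, exactly as you do. Your added remarks about uniformity in $(Q_X,Q_Y)$ and the harmless polynomial prefactor are the right details to make this explicit.
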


\begin{proof}
The proof is very similar to the proof of Lemma~\ref{lma:mutualinformationtest}.
\end{proof}

\begin{Lemma}[Generalized Likelihood-Ratio Test]
\label{lma:glrt}
The logarithm of the generalized likelihood ratio, divided by $n$, is
\begin{IEEEeqnarray}{rCl}
\Gamma &\triangleq& \frac{1}{n} \log \frac{P_{XY}^{\times n}(X^n, Y^n)}{\sup\limits_{Q_X \in \set{P}(\set{X}),\, Q_Y \in \set{P}(\set{Y})} (Q_X Q_Y)^{\times n}(X^n, Y^n)}\IEEEeqnarraynumspace\\
&=& D(\hat{R}_{XY}\|\hat{R}_X \hat{R}_Y) - D(\hat{R}_{XY}\|P_{XY}).\IEEEeqnarraynumspace\label{eq:qlrtb}
\end{IEEEeqnarray}
If \eqref{eq:pairachievrxy} is satisfied for all $R_{XY} \in \set{P}(\set{X} \times \set{Y})$, then the error-exponent pair $(\ratepxy,\rateqxqy)$ is achieved by the sequence of tests
\begin{IEEEeqnarray}{l}
T_n(\hat{R}_{XY}) \triangleq \begin{cases}1 & \text{if } \Gamma \le \rateqxqy - \ratepxy,\\
0 & \text{otherwise.}\end{cases}\IEEEeqnarraynumspace
\end{IEEEeqnarray}
\end{Lemma}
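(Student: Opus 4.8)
The plan is to establish the two assertions of the lemma in turn: the closed form~\eqref{eq:qlrtb} for $\Gamma$, and then the achievability of $(\ratepxy,\rateqxqy)$ by the stated test.

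For~\eqref{eq:qlrtb} I would rewrite the numerator and the denominator of the generalized likelihood ratio in terms of the type $\hat{R}_{XY}$. Since $P_{XY}^{\times n}(X^n,Y^n)=\prod_{(x,y)}P_{XY}(x,y)^{n\hat{R}_{XY}(x,y)}$, we get $\frac{1}{n}\log P_{XY}^{\times n}(X^n,Y^n)=-H(\hat{R}_{XY})-D(\hat{R}_{XY}\|P_{XY})$, where $H$ denotes entropy. Likewise $(Q_XQ_Y)^{\times n}(X^n,Y^n)=\prod_{x}Q_X(x)^{n\hat{R}_X(x)}\prod_{y}Q_Y(y)^{n\hat{R}_Y(y)}$, and $\frac{1}{n}\log$ of the two factors equals $-H(\hat{R}_X)-D(\hat{R}_X\|Q_X)$ and $-H(\hat{R}_Y)-D(\hat{R}_Y\|Q_Y)$; by the non-negativity of relative entropy the supremum over $(Q_X,Q_Y)$ is attained at $(\hat{R}_X,\hat{R}_Y)$, so $\frac{1}{n}\log$ of the denominator equals $-H(\hat{R}_X)-H(\hat{R}_Y)$. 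Subtracting and using $H(\hat{R}_X)+H(\hat{R}_Y)-H(\hat{R}_{XY})=D(\hat{R}_{XY}\|\hat{R}_X\hat{R}_Y)$ yields~\eqref{eq:qlrtb}.

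For the achievability claim I would argue exactly as in the proofs of Lemmas~\ref{lma:mutualinformationtest} and~\ref{lma:hoeffdingtest}. Fix the $\epsilon>0$ provided by Lemma~\ref{lma:rxystrictimpliesepsilon}. By~\eqref{eq:qlrtb} the event $\{T_n(\hat{R}_{XY})=1\}$ coincides with $\{D(\hat{R}_{XY}\|P_{XY})-\ratepxy\ge D(\hat{R}_{XY}\|\hat{R}_X\hat{R}_Y)-\rateqxqy\}$. The key observation is that on this event $D(\hat{R}_{XY}\|P_{XY})\ge\ratepxy+\epsilon$: indeed, \eqref{eq:lmarxyepsepsilon} forces at least one of $D(\hat{R}_{XY}\|P_{XY})-\ratepxy$ and $D(\hat{R}_{XY}\|\hat{R}_X\hat{R}_Y)-\rateqxqy$ to be at least $\epsilon$, and in either case the defining inequality of $\{T_n=1\}$ propagates this gap to the first quantity. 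Hence $P_{XY}^{\times n}[T_n(X^n,Y^n)=1]\le P_{XY}^{\times n}[D(\hat{R}_{XY}\|P_{XY})\ge\ratepxy+\epsilon]\le\tau\,e^{-n(\ratepxy+\epsilon)}$ by~\eqref{eq:achsanova}, which establishes~\eqref{eq:deferrexponentpxy}. Symmetrically, on $\{T_n(\hat{R}_{XY})=0\}$ one has $D(\hat{R}_{XY}\|\hat{R}_X\hat{R}_Y)\ge\rateqxqy+\epsilon$, so $(Q_XQ_Y)^{\times n}[T_n(X^n,Y^n)=0]\le\tau\,e^{-n(\rateqxqy+\epsilon)}$ for every $(Q_X,Q_Y)$ by~\eqref{eq:achsanovd}, which establishes~\eqref{eq:deferrexponentqxqy} after taking the infimum over $(Q_X,Q_Y)$ and then the $\liminf$ in $n$.

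I do not anticipate a serious obstacle. The computation of $\Gamma$ is routine once probabilities are written in terms of the type, and the achievability part reduces, for each of the two error probabilities, to a single appeal to the Sanov-type bounds of Lemma~\ref{lma:achievablesanov}. The only point deserving a careful sentence is the case analysis above: one must check that combining the threshold inequality defining $T_n$ with the dichotomy~\eqref{eq:lmarxyepsepsilon} raises the relevant divergence above its threshold by a full $\epsilon$ in every case, while keeping track of which inequalities are strict; this is an elementary fact about real numbers.
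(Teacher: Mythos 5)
Your proposal is correct, and both parts hold up: the type-based computation of $\Gamma$ (which the paper omits) is the standard argument, with the supremum over $(Q_X,Q_Y)$ attained at the marginal types $(\hat{R}_X,\hat{R}_Y)$, and the identity $H(\hat{R}_X)+H(\hat{R}_Y)-H(\hat{R}_{XY})=D(\hat{R}_{XY}\|\hat{R}_X\hat{R}_Y)$ gives \eqref{eq:qlrtb}. For the achievability part you follow the same overall strategy as the paper (the $\epsilon$ from Lemma~\ref{lma:rxystrictimpliesepsilon} combined with the Sanov-type bounds of Lemma~\ref{lma:achievablesanov}), but with a slightly different reduction: the paper bounds $P_{XY}^{\times n}[T_n=1]$ by a union bound over the two events $\{D(\hat{R}_{XY}\|\hat{R}_X\hat{R}_Y)<\rateqxqy+\epsilon\}$ and $\{D(\hat{R}_{XY}\|P_{XY})\ge\ratepxy+\epsilon\}$, invoking both \eqref{eq:achsanova} and \eqref{eq:achsanovb} and picking up a factor $2$, whereas you use the dichotomy \eqref{eq:lmarxyepsepsilon} together with the threshold inequality to prove the sharper inclusions $\{T_n=1\}\subseteq\{D(\hat{R}_{XY}\|P_{XY})\ge\ratepxy+\epsilon\}$ and $\{T_n=0\}\subseteq\{D(\hat{R}_{XY}\|\hat{R}_X\hat{R}_Y)\ge\rateqxqy+\epsilon\}$, so that only \eqref{eq:achsanova} and \eqref{eq:achsanovd} are needed and the prefactor is $\tau$ rather than $2\tau$ (your case analysis is airtight: if the ``other'' divergence exceeds its threshold by $\epsilon$, the test's defining inequality transfers the gap). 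The polynomial prefactor is irrelevant for the exponents either way, so the two routes are equally valid; yours is marginally cleaner in that it dispenses with the union bound, while the paper's has the virtue of reusing the four bounds of Lemma~\ref{lma:achievablesanov} verbatim without any further case analysis.
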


\begin{proof}
The proof of \eqref{eq:qlrtb} is omitted.
Using the $\epsilon > 0$ from Lemma~\ref{lma:rxystrictimpliesepsilon}, the sequence of tests $\{T_n\}_{n=1}^\infty$ satisfies \eqref{eq:deferrexponentpxy} because
\begin{IEEEeqnarray}{rCl}
\IEEEeqnarraymulticol{3}{l}{P_{XY}^{\times n}[T_n(X^n,Y^n) = 1]}\IEEEeqnarraynumspace\nonumber\\*\qquad
&\le& P_{XY}^{\times n}[D(\hat{R}_{XY}\|\hat{R}_X \hat{R}_Y) < \rateqxqy + \epsilon]\IEEEeqnarraynumspace\nonumber\\*
&& \binaryemptyleft + P_{XY}^{\times n}[D(\hat{R}_{XY}\|P_{XY}) \ge \ratepxy + \epsilon]\IEEEeqnarraynumspace\label{eq:glrtpxybounda}\\
&\le& 2\,(n + 1)^{\card{\set{X} \times \set{Y}}} \cdot e^{-n (\ratepxy + \epsilon)},\IEEEeqnarraynumspace\label{eq:glrtpxyboundb}
\end{IEEEeqnarray}
where \eqref{eq:glrtpxybounda} follows from the union bound because the events $D(\hat{R}_{XY}\|\hat{R}_X \hat{R}_Y) \ge \rateqxqy + \epsilon$ and $D(\hat{R}_{XY}\|P_{XY}) < \ratepxy + \epsilon$ imply $\Gamma > \rateqxqy - \ratepxy$;
and \eqref{eq:glrtpxyboundb} follows from Lemma~\ref{lma:achievablesanov}.
In the same way, the sequence of tests $\{T_n\}_{n=1}^\infty$ satisfies \eqref{eq:deferrexponentqxqy}.
\end{proof}

\begin{Lemma}
\label{lma:exponentnotachievable}
If \eqref{eq:pairachievrxy} does not hold for all $R_{XY} \in \set{P}(\set{X} \times \set{Y})$, i.e., if there exists an $R_{XY}^* \in \set{P}(\set{X} \times \set{Y})$ satisfying
\begin{IEEEeqnarray}{l}
\bigl(D(R_{XY}^*\|P_{XY}) \le \ratepxy\bigr) \spaceland \bigl(D(R_{XY}^*\|R_X^* R_Y^*) \le \rateqxqy\bigr),\IEEEeqnarraynumspace\label{eq:rxystarclosetopxyandrxry}
\end{IEEEeqnarray}
then the error-exponent pair $(\ratepxy,\rateqxqy)$ is not achievable.
(Not even if randomized tests are allowed.)
\end{Lemma}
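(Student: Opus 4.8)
The plan is a type-counting (change-of-measure) converse, and the only real subtlety is the use of the \emph{strict} inequalities in \eqref{eq:deferrexponentpxy} and \eqref{eq:deferrexponentqxqy}. Assume for contradiction that $(\ratepxy,\rateqxqy)$ is achievable, possibly by a sequence of randomized tests $\{T_n\}_{n=1}^\infty$; since \eqref{eq:rxystarclosetopxyandrxry} together with the nonnegativity of relative entropy forces $\ratepxy \ge 0$ and $\rateqxqy \ge 0$, this is the only case to consider. From the strictness of \eqref{eq:deferrexponentpxy} and \eqref{eq:deferrexponentqxqy} I would first extract a single $\gamma > 0$ such that, for all sufficiently large $n$ and all $Q_X \in \set{P}(\set{X})$ and $Q_Y \in \set{P}(\set{Y})$, one has $P_{XY}^{\times n}[T_n(X^n,Y^n)=1] \le e^{-n(\ratepxy+\gamma)}$ and $(Q_X Q_Y)^{\times n}[T_n(X^n,Y^n)=0] \le e^{-n(\rateqxqy+\gamma)}$, where for a randomized test $P_{XY}^{\times n}[T_n(X^n,Y^n)=1]$ denotes the probability that $T_n$ outputs $1$ (the deterministic case being $T_n \in \{0,1\}$).

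Since $R_{XY}^*$ need not be a type, I would next replace it by one. Because $D(R_{XY}^*\|P_{XY}) \le \ratepxy < \infty$, the PMF $R_{XY}^*$ is supported within the support of $P_{XY}$, so for each large $n$ there is a type $R_{XY}^{(n)}$ of denominator $n$, also supported within the support of $P_{XY}$, with $R_{XY}^{(n)} \to R_{XY}^*$. Continuity of the entropy, of the mutual information $R_{XY} \mapsto D(R_{XY}\|R_X R_Y)$, and of $R_{XY} \mapsto D(R_{XY}\|P_{XY})$ on the face of the simplex carried by the support of $P_{XY}$ then gives $H(R_{XY}^{(n)}) \to H(R_{XY}^*)$ together with $D(R_{XY}^{(n)}\|P_{XY}) \le \ratepxy + o(1)$ and $D(R_{XY}^{(n)}\|R_X^{(n)} R_Y^{(n)}) \le \rateqxqy + o(1)$, where $o(1) \to 0$.

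For the core argument, let $\set{T}_n$ denote the type class of $R_{XY}^{(n)}$. Both $P_{XY}^{\times n}$ and $(R_X^{(n)} R_Y^{(n)})^{\times n}$ are uniform on $\set{T}_n$, and by the standard lower bound on the probability of a type class \cite{CsiszarKorner}, $P_{XY}^{\times n}[\set{T}_n] \ge (n+1)^{-\card{\set{X} \times \set{Y}}} e^{-n D(R_{XY}^{(n)}\|P_{XY})}$ and $(R_X^{(n)} R_Y^{(n)})^{\times n}[\set{T}_n] \ge (n+1)^{-\card{\set{X} \times \set{Y}}} e^{-n D(R_{XY}^{(n)}\|R_X^{(n)} R_Y^{(n)})}$. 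Let $a_n \in [0,1]$ be the average of $T_n$ over $\set{T}_n$ (for a deterministic test, the fraction of $\set{T}_n$ on which $T_n = 1$). Since $P_{XY}^{\times n}$ is uniform on $\set{T}_n$, the contribution of $\set{T}_n$ to $P_{XY}^{\times n}[T_n(X^n,Y^n)=1]$ equals $a_n\,P_{XY}^{\times n}[\set{T}_n]$; likewise the contribution of $\set{T}_n$ to $(R_X^{(n)} R_Y^{(n)})^{\times n}[T_n(X^n,Y^n)=0]$ equals $(1-a_n)\,(R_X^{(n)} R_Y^{(n)})^{\times n}[\set{T}_n]$. Combining these with the first-step bounds (the second applied with $(Q_X,Q_Y)=(R_X^{(n)},R_Y^{(n)})$) and the type-class bounds gives $a_n \le (n+1)^{\card{\set{X} \times \set{Y}}} e^{-n(\ratepxy+\gamma-D(R_{XY}^{(n)}\|P_{XY}))}$ and $1-a_n \le (n+1)^{\card{\set{X} \times \set{Y}}} e^{-n(\rateqxqy+\gamma-D(R_{XY}^{(n)}\|R_X^{(n)} R_Y^{(n)}))}$. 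By the second step both exponents equal $-n(\gamma-o(1))$, so
\[
1 = a_n + (1-a_n) \le 2\,(n+1)^{\card{\set{X} \times \set{Y}}}\, e^{-n(\gamma-o(1))} \to 0,
\]
which is the desired contradiction.

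The step I expect to be decisive is producing the strictly positive slack $\gamma$: with only the weak bounds $P_{XY}^{\times n}[T_n(X^n,Y^n)=1] \le e^{-n\ratepxy}$ and $(Q_X Q_Y)^{\times n}[T_n(X^n,Y^n)=0] \le e^{-n\rateqxqy}$, the same computation would give merely $1 \le 2\,(n+1)^{\card{\set{X} \times \set{Y}}}$, which is no contradiction; the exponential factor $e^{-n\gamma}$ is exactly what is needed to overcome the polynomial number of types, and this is why the statement concerns the strict-inequality notion of achievability. The remaining ingredients — approximating $R_{XY}^*$ by a nearby type and phrasing the counting so that randomized tests are handled by averaging $T_n$ over a type class — are routine.
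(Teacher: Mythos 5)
Your proposal is correct and follows essentially the same route as the paper's proof: approximate $R_{XY}^*$ by nearby types, exploit that the i.i.d.\ measures $P_{XY}^{\times n}$ and $(R_X^{(n)}R_Y^{(n)})^{\times n}$ are uniform on the corresponding type class with probability at least $(n+1)^{-\card{\set{X}\times\set{Y}}}e^{-nD(\cdot\|\cdot)}$, and use the slack coming from the strict inequalities to beat the polynomial factor. The only cosmetic difference is that you add the bounds on $a_n$ and $1-a_n$ to get $1\le 2(n+1)^{\card{\set{X}\times\set{Y}}}e^{-n(\gamma-o(1))}$, whereas the paper argues via the dichotomy $\max\{\pr{T_n=0},\pr{T_n=1}\}\ge\tfrac12$ under the uniform distribution on the type class and shows $\limsup_n f(n)\le 0$ for every test sequence.
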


\begin{proof}
In the \hyperlink{prf:exponentnotachievable}{Appendix}.
\end{proof}

\begin{proof}[Proof of Theorem~\ref{thm:pairachievableiffrxy}]
The theorem follows from Lemma~\ref{lma:mutualinformationtest} and from Lemma~\ref{lma:exponentnotachievable}.
\end{proof}

\begin{proof}[Proof of Corollary~\ref{cor:expfixedexpoptimized}]
For a fixed $\rateqxqy \in \reals$, define
\begin{IEEEeqnarray}{l}
C \triangleq \inf_{\substack{R_{XY} \in \set{P}(\set{X} \times \set{Y}):\\D(R_{XY}\|R_X R_Y) \le \rateqxqy}} D(R_{XY}\|P_{XY}).\IEEEeqnarraynumspace
\end{IEEEeqnarray}
By Theorem~\ref{thm:pairachievableiffrxy}, all error-exponent pairs $(\ratepxy, \rateqxqy)$ with $\ratepxy < C$ are achievable, while those with $\ratepxy > C$ are not.
Therefore, $\ratepxy(\rateqxqy) = C$.
An analogous argument proves \eqref{eq:defeqofep}.
\end{proof}

\section{Error-Exponent Functions and $J_\alpha(P_{XY})$}
\label{sec:errorexponentvsrenyi}

After a preparatory lemma, we prove Theorem~\ref{thm:renyiinfvsratepxy} and state Theorem~\ref{thm:renyiinfvsrateqxqy}, the analog of Theorem~\ref{thm:renyiinfvsratepxy} for $\rateqxqy^{**}(\cdot)$.

\begin{Lemma}
\label{lma:epconjugate}
The convex conjugate of $\ratepxy(\cdot)$ is
\begin{IEEEeqnarray}{l}
\ratepxy^*(\lambda) = \begin{cases}+\infty & \text{if } \lambda > 0,\\
\lambda \, J_\frac{1}{1 - \lambda}(P_{XY}) & \text{otherwise.}\end{cases}\IEEEeqnarraynumspace
\end{IEEEeqnarray}
\end{Lemma}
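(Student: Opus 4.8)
The plan is to compute $\ratepxy^*(\lambda) = \sup_{\rateqxqy \in \reals} [\lambda \rateqxqy - \ratepxy(\rateqxqy)]$ directly, using the explicit formula for $\ratepxy(\cdot)$ from Corollary~\ref{cor:expfixedexpoptimized}. First I would observe that $\ratepxy(\cdot)$ is nonincreasing in $\rateqxqy$: enlarging $\rateqxqy$ relaxes the constraint $D(R_{XY}\|R_XR_Y) \le \rateqxqy$ in \eqref{eq:defepofeq}, so the infimum can only decrease. Since $\ratepxy$ is real-valued on the relevant range (it is finite by \eqref{eq:defepofeq}, as noted in Section~\ref{sec:preliminaries}), a nonincreasing function has $\sup_{\rateqxqy}[\lambda\rateqxqy - \ratepxy(\rateqxqy)] = +\infty$ whenever $\lambda > 0$: just send $\rateqxqy \to +\infty$, where $\ratepxy(\rateqxqy)$ stays bounded below (by $0$) while $\lambda\rateqxqy \to +\infty$. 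This disposes of the first case.

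For $\lambda \le 0$, I would substitute \eqref{eq:defepofeq} and swap the two suprema/infima:
\begin{IEEEeqnarray}{rCl}
\ratepxy^*(\lambda) &=& \sup_{\rateqxqy \in \reals} \Bigl[\lambda \rateqxqy - \inf_{\substack{R_{XY}:\,D(R_{XY}\|R_XR_Y) \le \rateqxqy}} D(R_{XY}\|P_{XY})\Bigr]\nonumber\\
&=& \sup_{\rateqxqy \in \reals}\; \sup_{\substack{R_{XY}:\,D(R_{XY}\|R_XR_Y) \le \rateqxqy}} \bigl[\lambda \rateqxqy - D(R_{XY}\|P_{XY})\bigr].\nonumber
\end{IEEEeqnarray}
Now, for fixed $R_{XY}$, since $\lambda \le 0$ the inner objective $\lambda\rateqxqy$ is maximized over the feasible $\rateqxqy$ (those with $\rateqxqy \ge D(R_{XY}\|R_XR_Y)$) by taking $\rateqxqy = D(R_{XY}\|R_XR_Y)$. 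Hence the double supremum collapses to a single supremum over $R_{XY} \in \set{P}(\set{X}\times\set{Y})$:
\begin{IEEEeqnarray}{l}
\ratepxy^*(\lambda) = \sup_{R_{XY}} \bigl[\lambda \, D(R_{XY}\|R_XR_Y) - D(R_{XY}\|P_{XY})\bigr].\nonumber
\end{IEEEeqnarray}
It remains to identify this quantity with $\lambda \, J_{\frac{1}{1-\lambda}}(P_{XY})$, and here I would invoke the already-established machinery rather than redo it: write $\lambda \, D(R_{XY}\|R_XR_Y) - D(R_{XY}\|P_{XY}) = -\bigl[D(R_{XY}\|P_{XY}) - \lambda D(R_{XY}\|R_XR_Y)\bigr]$, and use Lemma~\ref{lma:drxyrxryledrxyqxqy} together with Lemma~\ref{lma:infdrqdrpeqsupalpha}. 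Specifically, by Lemma~\ref{lma:drxyrxryledrxyqxqy} and the definition of $J_\alpha$, the supremum over $R_{XY}$ of the expression involving $D(R_{XY}\|R_XR_Y)$ can be rewritten via a further supremum over product $Q_XQ_Y$, and then Lemma~\ref{lma:infdrqdrpeqsupalpha} (applied on $\set{Z} = \set{X}\times\set{Y}$ with $Q = Q_XQ_Y$, read as a Legendre-type identity) converts $\sup_{R}[\lambda D(R\|Q) - D(R\|P)]$ into $\lambda D_{\frac{1}{1-\lambda}}(P\|Q)$, after the substitution $\alpha = \frac{1}{1-\lambda}$ (so $\frac{1-\alpha}{\alpha} = -\lambda$ and $\alpha \in (0,1]$ corresponds to $\lambda \le 0$). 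Minimizing over $Q_XQ_Y$ then yields $\lambda J_{\frac{1}{1-\lambda}}(P_{XY})$ (the sign of $\lambda$ turning the minimization of $D_\alpha$ into the correct optimization direction).

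The main obstacle I anticipate is the bookkeeping in this last identification: Lemma~\ref{lma:infdrqdrpeqsupalpha} is stated as an equality between a constrained infimum of $D(R\|P)$ and a supremum over $\alpha$, not directly as a Fenchel-conjugate formula, so I would need to carefully derive the "conjugate form" $\sup_R[\lambda D(R\|Q) - D(R\|P)] = \lambda D_{\frac{1}{1-\lambda}}(P\|Q)$ for $\lambda \le 0$ from it — most cleanly by the same collapse-the-constraint argument applied to Lemma~\ref{lma:infdrqdrpeqsupalpha} itself, or by a direct Lagrangian/Gibbs-variational computation of $\sup_R[\lambda D(R\|Q) - D(R\|P)]$, whose optimizer is the tilted PMF $R(z) \propto P(z)^{1/(1-\lambda)} Q(z)^{-\lambda/(1-\lambda)}$. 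Handling the edge cases $\lambda = 0$ (giving $\ratepxy^*(0) = -\inf_R D(R\|P) = 0 = 0 \cdot J_1(P_{XY})$, consistent with $\alpha = 1$) and the interchange of $\min_{Q_XQ_Y}$ with $\sup_R$ (justified by compactness of the simplices and continuity/convexity, as in the proof of Lemma~\ref{lma:rxystrictimpliesepsilon}) are the remaining routine points.
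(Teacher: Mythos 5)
Your proposal matches the paper's proof essentially step for step: the same dispatch of $\lambda>0$ (using that $\ratepxy(\cdot)$ is eventually finite/zero so $\lambda\,\rateqxqy-\ratepxy(\rateqxqy)\to+\infty$), and for $\lambda\le0$ the same substitution of \eqref{eq:defepofeq}, collapse of the constraint at $\rateqxqy=D(R_{XY}\|R_XR_Y)$, and use of Lemma~\ref{lma:drxyrxryledrxyqxqy} to pass to product distributions $Q_XQ_Y$, with the paper then simply citing \cite[Theorem~30]{VanErvenHarremoes} for the identity $\sup_R[\lambda\,D(R\|Q)-D(R\|P)]=\lambda\,D_{\frac{1}{1-\lambda}}(P\|Q)$, which is exactly your tilted-PMF computation. (Two small notes: of your two suggested justifications for that identity, the direct Gibbs/tilting computation is the safe one, since deriving it from Lemma~\ref{lma:infdrqdrpeqsupalpha} by conjugation would additionally require convexity and lower semicontinuity of $\lambda\mapsto\lambda\,D_{\frac{1}{1-\lambda}}(P\|Q)$; and for $\lambda>0$ what is needed is that $\ratepxy(\rateqxqy)$ stays bounded \emph{above} as $\rateqxqy\to\infty$ — which your monotonicity-plus-finiteness observation does give — not below.)
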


\begin{proof}
By the definition of the convex conjugate,
\begin{IEEEeqnarray}{l}
\ratepxy^*(\lambda) = \sup_{\rateqxqy \in \reals} [\lambda \, \rateqxqy - \ratepxy(\rateqxqy)].\IEEEeqnarraynumspace\label{eq:epconjugate}
\end{IEEEeqnarray}
For $\lambda > 0$, the RHS of \eqref{eq:epconjugate} is $+\infty$, since we can lower-bound the supremum over $\rateqxqy$ with the limit as $\rateqxqy$ tends to infinity and since $\ratepxy(\rateqxqy)$ is zero for all $\rateqxqy \ge D(P_{XY}\|P_X P_Y)$, which can be verified by choosing $R_{XY} = P_{XY}$ in the RHS of \eqref{eq:defepofeq}.

Now assume $\lambda \le 0$.
Then,
\begin{IEEEeqnarray}{rCl}
\IEEEeqnarraymulticol{3}{l}{\sup_{\rateqxqy \in \reals} [\lambda \, \rateqxqy - \ratepxy(\rateqxqy)]}\IEEEeqnarraynumspace\nonumber\\*[-0.5ex]\quad
&=& \sup_{\rateqxqy \in \reals} \biggl[\lambda \, \rateqxqy - \inf_{\substack{R_{XY} \in \set{P}(\set{X} \times \set{Y}):\\D(R_{XY}\|R_X R_Y) \le \rateqxqy}} D(R_{XY}\|P_{XY})\biggr]\IEEEeqnarraynumspace\label{eq:epconjugatea}\\
&=& \sup_{\rateqxqy \in \reals} \, \sup_{\substack{R_{XY} \in \set{P}(\set{X} \times \set{Y}):\\D(R_{XY}\|R_X R_Y) \le \rateqxqy}} [\lambda \, \rateqxqy - D(R_{XY}\|P_{XY})]\IEEEeqnarraynumspace\\
&=& \sup_{R_{XY}} \, \sup_{\rateqxqy : \, \rateqxqy \ge D(R_{XY}\|R_X R_Y)} [\lambda \, \rateqxqy - D(R_{XY}\|P_{XY})]\IEEEeqnarraynumspace\\
&=& \sup_{R_{XY}} \, [\lambda \, D(R_{XY}\|R_X R_Y) - D(R_{XY}\|P_{XY})]\IEEEeqnarraynumspace\label{eq:epconjugated}\\[-0.5ex]
&=& -(1 - \lambda) \inf_{\qxcommaqylessspace} \inf_{R_{XY}} \biggl[\frac{-\lambda}{1 - \lambda} \, D(R_{XY}\|Q_X Q_Y)\IEEEeqnarraynumspace\nonumber\\*[-0.5ex]
&& \hphantom{-(1 - \lambda) \inf_{\qxcommaqylessspace} \inf_{R_{XY}} \biggl[\binaryemptyleft} + \frac{1}{1 - \lambda} \, D(R_{XY}\|P_{XY})\biggr]\IEEEeqnarraynumspace\label{eq:epconjugatee}\\
&=& \lambda \inf_{\qxcommaqylessspace} D_\frac{1}{1 - \lambda}(P_{XY}\|Q_X Q_Y)\IEEEeqnarraynumspace\label{eq:epconjugatef}\\
&=& \lambda \, J_\frac{1}{1 - \lambda}(P_{XY}),\IEEEeqnarraynumspace\label{eq:epconjugateg}
\end{IEEEeqnarray}
where \eqref{eq:epconjugatea} follows from \eqref{eq:defepofeq};
\eqref{eq:epconjugated} holds because $\lambda \le 0$, so $\rateqxqy = D(R_{XY}\|R_X R_Y)$ achieves the maximum;
\eqref{eq:epconjugatee} follows from Lemma~\ref{lma:drxyrxryledrxyqxqy} because $\frac{-\lambda}{1 - \lambda} \ge 0$ and $1 - \lambda \ge 1$;
\eqref{eq:epconjugatef} follows from \cite[Theorem~30]{VanErvenHarremoes} with $\alpha = \frac{1}{1 - \lambda} \in (0,1]$; and
\eqref{eq:epconjugateg} follows from the definition of $J_\alpha(P_{XY})$.
(Technically, the case $\alpha = 1$ is not covered by \cite[Theorem~30]{VanErvenHarremoes}, but it is easy to see that \eqref{eq:epconjugatef} also holds if $\alpha = 1$, i.e., if $\lambda = 0$.)
\end{proof}

\begin{proof}[Proof of Theorem~\ref{thm:renyiinfvsratepxy}]
Using Lemma~\ref{lma:epconjugate}, we have
\begin{IEEEeqnarray}{rCl}
\ratepxy^{**}(\rateqxqy) &=& \sup_{\lambda \in \reals} \, [\lambda \, \rateqxqy - \ratepxy^*(\lambda)]\IEEEeqnarraynumspace\\
&=& \sup_{\lambda \le 0} \, [\lambda \, \rateqxqy - \ratepxy^*(\lambda)]\IEEEeqnarraynumspace\label{eq:biconjb}\\
&=& \sup_{\lambda \le 0} \, [\lambda \, \rateqxqy - \lambda \, J_\frac{1}{1 - \lambda}(P_{XY})]\IEEEeqnarraynumspace\\[-0.5ex]
&=& \sup_{\alpha \in (0,1]} \frac{1 - \alpha}{\alpha} \bigl(J_\alpha(P_{XY}) - \rateqxqy\bigr),\IEEEeqnarraynumspace\label{eq:biconjd}
\end{IEEEeqnarray}
where \eqref{eq:biconjb} holds because $\ratepxy^*(\lambda) = +\infty$ for all $\lambda > 0$, and
\eqref{eq:biconjd} follows from the substitution $\alpha = \frac{1}{1 - \lambda} \in (0,1]$.

By \cite[Theorem~4.2.1]{BorweinLewis}, a function $h\colon \reals \to \reals \cup \{+\infty\}$ is equal to its biconjugate if, and only if, it is lower semicontinuous and convex.
The function $\ratepxy\colon \reals \to [0,+\infty]$ is always lower semicontinuous.
(This follows from a topological argument, which is omitted here.)
Thus, $\ratepxy(\cdot)$ is equal to its biconjugate if, and only if, it is convex.
\end{proof}

\begin{Theorem}
\label{thm:renyiinfvsrateqxqy}
For all $\ratepxy \in \reals$,
\begin{IEEEeqnarray}{l}
\sup_{\alpha \in [0,1)} \biggl[J_\alpha(P_{XY}) - \frac{\alpha}{1 - \alpha} \, \ratepxy\biggr] = \rateqxqy^{**}(\ratepxy).\IEEEeqnarraynumspace
\end{IEEEeqnarray}
Furthermore, $\rateqxqy^{**}(\ratepxy) = \rateqxqy(\ratepxy)$ for all $\ratepxy \in \reals$ if, and only if, $\rateqxqy(\cdot)$ is convex on $\reals$.
\end{Theorem}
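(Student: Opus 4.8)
The plan is to follow the proof of Theorem~\ref{thm:renyiinfvsratepxy} step for step: first compute the convex conjugate $\rateqxqy^*(\cdot)$ (the analog of Lemma~\ref{lma:epconjugate}), then take one more conjugate to obtain $\rateqxqy^{**}(\cdot)$, and finally combine \cite[Theorem~4.2.1]{BorweinLewis} with the lower semicontinuity of $\rateqxqy(\cdot)$.

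For the conjugate I would start from $\rateqxqy^*(\lambda) = \sup_{\ratepxy \in \reals} [\lambda \, \ratepxy - \rateqxqy(\ratepxy)]$ together with \eqref{eq:defeqofep}. Note that $\rateqxqy(\ratepxy) = +\infty$ for $\ratepxy < 0$ (the constraint set in \eqref{eq:defeqofep} is then empty), whereas $\rateqxqy(\ratepxy) \le D(P_{XY}\|P_X P_Y)$ for $\ratepxy \ge 0$ (choose $R_{XY} = P_{XY}$). Hence $\rateqxqy^*(\lambda) = +\infty$ for $\lambda > 0$ (let $\ratepxy \to \infty$), and for $\lambda \le 0$ only $\ratepxy \ge 0$ contributes. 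For $\lambda \le 0$ I would substitute \eqref{eq:defeqofep}, interchange the two suprema, observe that $\ratepxy = D(R_{XY}\|P_{XY})$ is optimal (because $\lambda \le 0$), and apply Lemma~\ref{lma:drxyrxryledrxyqxqy} to replace $D(R_{XY}\|R_X R_Y)$ by $\inf_{\qxcommaqylessspace} D(R_{XY}\|Q_X Q_Y)$, which yields $\rateqxqy^*(\lambda) = \sup_{\qxcommaqylessspace} \, \sup_{R_{XY}} \bigl[\lambda \, D(R_{XY}\|P_{XY}) - D(R_{XY}\|Q_X Q_Y)\bigr]$. For $\lambda < 0$, setting $\alpha = \frac{-\lambda}{1 - \lambda} \in (0,1)$, the inner supremum equals $-(1 - \lambda) \inf_{R_{XY}} [\alpha \, D(R_{XY}\|P_{XY}) + (1 - \alpha) D(R_{XY}\|Q_X Q_Y)]$, which by \cite[Theorem~30]{VanErvenHarremoes} (and $(1 - \lambda)(1 - \alpha) = 1$) equals $-D_\alpha(P_{XY}\|Q_X Q_Y)$; taking the supremum over $Q_X, Q_Y$ then gives $\rateqxqy^*(\lambda) = -J_{\frac{-\lambda}{1 - \lambda}}(P_{XY})$. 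The endpoint $\lambda = 0$ (i.e.\ $\alpha = 0$) is outside the scope of \cite[Theorem~30]{VanErvenHarremoes}, so I would treat it directly: $\rateqxqy^*(0) = -\inf_{\ratepxy \in \reals} \rateqxqy(\ratepxy) = 0 = -J_0(P_{XY})$, the middle equality because $\rateqxqy(\ratepxy) = 0$ once $\ratepxy \ge -\log \max_{x,y} P_{XY}(x,y)$ (take $R_{XY}$ a point mass, so that $R_X R_Y = R_{XY}$).

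With $\rateqxqy^*(\cdot)$ in hand, the biconjugate is $\rateqxqy^{**}(\ratepxy) = \sup_{\lambda \in \reals} [\lambda \, \ratepxy - \rateqxqy^*(\lambda)] = \sup_{\lambda \le 0} \bigl[\lambda \, \ratepxy + J_{\frac{-\lambda}{1 - \lambda}}(P_{XY})\bigr]$, the $\lambda > 0$ terms dropping out because $\rateqxqy^*(\lambda) = +\infty$ there. The substitution $\alpha = \frac{-\lambda}{1 - \lambda}$, i.e.\ $\lambda = \frac{-\alpha}{1 - \alpha}$, maps $\lambda \in (-\infty, 0]$ bijectively onto $\alpha \in [0,1)$ and turns this supremum into $\sup_{\alpha \in [0,1)} \bigl[J_\alpha(P_{XY}) - \frac{\alpha}{1 - \alpha} \, \ratepxy\bigr]$, which is the first claim. (The substitution here differs from the one in the proof of Theorem~\ref{thm:renyiinfvsratepxy}, which is why the R\'enyi orders run over $[0,1)$ rather than $(0,1]$.) For the second claim I would argue, just as in the proof of Theorem~\ref{thm:renyiinfvsratepxy}, that $\rateqxqy\colon \reals \to [0,+\infty]$ is always lower semicontinuous, so by \cite[Theorem~4.2.1]{BorweinLewis} it coincides with $\rateqxqy^{**}(\cdot)$ on all of $\reals$ if, and only if, it is convex.

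These are routine manipulations of conjugate functions; I expect the only delicate points to be (i) the bookkeeping around the convention $\rateqxqy(\ratepxy) = +\infty$ for $\ratepxy < 0$ when interchanging the supremum over $\ratepxy$ with the infimum in \eqref{eq:defeqofep} (this is also what makes $\rateqxqy^*$ equal to $+\infty$ on $\lambda > 0$), and (ii) the separate treatment of the endpoint $\alpha = 0$, which \cite[Theorem~30]{VanErvenHarremoes} does not cover --- in complete analogy with the way $\alpha = 1$ is dispatched at the end of the proof of Lemma~\ref{lma:epconjugate}.
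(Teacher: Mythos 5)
Your proposal is correct and follows essentially the route the paper intends (its proof is omitted precisely because it mirrors Lemma~\ref{lma:epconjugate} and Theorem~\ref{thm:renyiinfvsratepxy}): compute $\rateqxqy^*(\lambda)$ via \eqref{eq:defeqofep}, Lemma~\ref{lma:drxyrxryledrxyqxqy}, and \cite[Theorem~30]{VanErvenHarremoes}, then biconjugate with the substitution $\lambda = \frac{-\alpha}{1-\alpha}$, and finish with lower semicontinuity plus \cite[Theorem~4.2.1]{BorweinLewis}. Your separate treatment of the endpoint $\lambda = 0$ (i.e., $\alpha = 0$, using a point-mass $R_{XY}$ and $J_0(P_{XY}) = 0$) is the right analog of the paper's handling of $\alpha = 1$ in Lemma~\ref{lma:epconjugate}.
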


\begin{proof}
Omitted; the proof is similar to the proofs of Lemma~\ref{lma:epconjugate} and Theorem~\ref{thm:renyiinfvsratepxy}.
\end{proof}

\section{An Example Where $\ratepxy(\cdot)$ Is Not Convex}
\label{sec:examplenotconvex}

\begin{Example}
\label{exa:nonconvex}
Consider $\set{X} = \set{Y} = \{1,2,3\}$ and $P_{XY}$ given by
\begin{center}
\begin{tabular}{c|ccc}
$P_{XY}(x,y)$ & $y=1$ & $y=2$ & $y=3$\\
\hline
\\[-2.4ex]
$x=1$ & $10^{-4}$ & $\gamma$ & $\gamma$\\
$x=2$ & $\gamma$ & $10^{-4}$ & $\gamma$\\
$x=3$ & $\gamma$ & $\gamma$ & $10^{-4}\rlap{$,$}$
\end{tabular}
\end{center}
where $\gamma = \frac{9997}{60000} \approx 0.167$.
Then,
\begin{IEEEeqnarray}{l}
\ratepxy\bigl(3898 \,/\, 2^{17}\bigr) \le 58593464420737815 \,/\, 2^{56},\IEEEeqnarraynumspace\label{eq:nonconvexuba}\\
\ratepxy\bigl(3984 \,/\, 2^{17}\bigr) \le 58382556630811219 \,/\, 2^{56},\IEEEeqnarraynumspace\label{eq:nonconvexubb}\\
\ratepxy\bigl(3941 \,/\, 2^{17}\bigr) \ge 58488010525784883 \,/\, 2^{56}.\IEEEeqnarraynumspace\label{eq:nonconvexubc}
\end{IEEEeqnarray}
This implies
\begin{IEEEeqnarray}{rCl}
\IEEEeqnarraymulticol{3}{l}{\ratepxy\mleft(\frac{3898 + 3984}{2 \cdot 2^{17}}\mright) - \frac{1}{2} \, \ratepxy\mleft(\frac{3898}{2^{17}}\mright) - \frac{1}{2} \, \ratepxy\mleft(\frac{3984}{2^{17}}\mright)}\IEEEeqnarraynumspace\nonumber\\*[0.4ex]\quad
&=& \ratepxy\mleft(\frac{3941}{2^{17}}\mright) - \frac{1}{2} \, \ratepxy\mleft(\frac{3898}{2^{17}}\mright) - \frac{1}{2} \, \ratepxy\mleft(\frac{3984}{2^{17}}\mright)\IEEEeqnarraynumspace\label{eq:nonconvexgapa}\\[0.3ex]
&\ge& 10366 \,/\, 2^{56} \approx 1.44 \cdot 10^{-13},\IEEEeqnarraynumspace
\end{IEEEeqnarray}
so $\ratepxy(\cdot)$ is not convex.
(We estimate the LHS of \eqref{eq:nonconvexgapa} to be in the order of $10^{-7}$.)
\end{Example}

To verify \eqref{eq:nonconvexuba}, we use \eqref{eq:defepofeq} and check (see Remark~\ref{rmk:epeqlowerbound} below) that a specific $R_{XY} \in \set{P}(\set{X} \times \set{Y})$ satisfies
\begin{IEEEeqnarray}{rCl}
D(R_{XY}\|R_X R_Y) &\le& 3898 \,/\, 2^{17},\IEEEeqnarraynumspace\\
D(R_{XY}\|P_{XY}) &\le& 58593464420737815 \,/\, 2^{56}.\IEEEeqnarraynumspace
\end{IEEEeqnarray}
Similarly, \eqref{eq:nonconvexubb} can be verified.
Establishing \eqref{eq:nonconvexubc} is much more involved and is the topic of the rest of this section.

Let $\set{Q}$ denote the set of all product distributions on $\set{X} \times \set{Y}$,
\begin{IEEEeqnarray}{l}
\set{Q} \triangleq \{Q_{XY} \in \set{P}(\set{X} \times \set{Y}) : Q_{XY} = Q_X Q_Y\}.\IEEEeqnarraynumspace
\end{IEEEeqnarray}
We express $\set{Q}$ as a finite union, i.e.,
\begin{IEEEeqnarray}{l}
\set{Q} = \bigcup_{i=1}^{k} \set{Q}_i,\IEEEeqnarraynumspace\label{eq:qunionqi}
\end{IEEEeqnarray}
where for each $i \in \{1,\ldots,k\}$,
\begin{IEEEeqnarray}{rClClCl}
\set{Q}_i &\triangleq& \IEEEeqnarraymulticol{5}{l}{\bigl\{Q_X Q_Y : \bigl(Q_X \in \set{Q}_{X,i}\bigr) \spaceland \bigl(Q_Y \in \set{Q}_{Y,i}\bigr)\bigr\},}\IEEEeqnarraynumspace\label{eq:qxyidef}\\
\set{Q}_{X,i} &\triangleq& \bigl\{Q_X \in \set{P}(\set{X}) : \hphantom{\spaceland} \bigl(l_{i,1} &\le& Q_X(1) &\le& u_{i,1}\bigr)\IEEEeqnarraynumspace\nonumber\\*
&& \hfill \spaceland \bigl(l_{i,2} &\le& Q_X(2) &\le& u_{i,2}\bigr)\IEEEeqnarraynumspace\nonumber\\*
&& \hfill \spaceland \bigl(l_{i,3} &\le& Q_X(3) &\le& u_{i,3}\bigr)\bigr\},\IEEEeqnarraynumspace\\
\set{Q}_{Y,i} &\triangleq& \bigl\{Q_Y \in \set{P}(\set{Y}) : \hphantom{\spaceland} \hfill \bigl(l_{i,4} &\le& Q_Y(1) &\le& u_{i,4}\bigr)\IEEEeqnarraynumspace\nonumber\\*
&& \hfill \spaceland \bigl(l_{i,5} &\le& Q_Y(2) &\le& u_{i,5}\bigr)\IEEEeqnarraynumspace\nonumber\\*
&& \hfill \spaceland \bigl(l_{i,6} &\le& Q_Y(3) &\le& u_{i,6}\bigr)\bigr\},\IEEEeqnarraynumspace
\end{IEEEeqnarray}
where $l_{i,1},\ldots,l_{i,6}$ and $u_{i,1},\ldots,u_{i,6}$ are nonnegative numbers.
With the help of Lemma~\ref{lma:qilowerbound} below, we can verify that for specific $\rateqxqy \in \reals$ and $\Gamma \in \reals$ and for all $i \in \{1,\ldots,k\}$,
\begin{IEEEeqnarray}{l}
\infvphantomsup_{Q_{XY} \in \set{Q}_i} \hspace{0.1em} \sup_{\alpha \in (0,1]} \frac{1 - \alpha}{\alpha} \bigl(D_\alpha(P_{XY}\|Q_{XY}) - \rateqxqy\bigr) \ge \Gamma,\IEEEeqnarraynumspace
\end{IEEEeqnarray}
which by Lemma~\ref{lma:epeqinfsup} below and \eqref{eq:qunionqi} implies
\begin{IEEEeqnarray}{l}
\ratepxy(\rateqxqy) \ge \Gamma.\IEEEeqnarraynumspace
\end{IEEEeqnarray}
More details are given in Remark~\ref{rmk:epeqlowerbound}.

\begin{Lemma}
\label{lma:epeqinfsup}
For all $\rateqxqy \in \reals$,
\begin{IEEEeqnarray}{l}
\ratepxy(\rateqxqy) = \hspace{-0.1em} \infvphantomsup_{Q_{XY} \in \set{Q}} \hspace{0.1em} \sup_{\alpha \in (0,1]} \frac{1 - \alpha}{\alpha} \bigl(D_\alpha(P_{XY}\|Q_{XY}) - \rateqxqy\bigr)\hspace{-0.1em}.\IEEEeqnarraynumspace
\end{IEEEeqnarray}
\end{Lemma}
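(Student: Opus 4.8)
The plan is to combine Corollary~\ref{cor:expfixedexpoptimized} with Lemma~\ref{lma:drxyrxryledrxyqxqy} and Lemma~\ref{lma:infdrqdrpeqsupalpha}, treating the infimum over product distributions $Q_{XY} \in \set{Q}$ as an explicit parametrization of the constraint $D(R_{XY}\|R_X R_Y) \le \rateqxqy$ appearing in \eqref{eq:defepofeq}. First I would start from the right-hand side and, for each fixed $Q_{XY} = Q_X Q_Y \in \set{Q}$, apply Lemma~\ref{lma:infdrqdrpeqsupalpha} with $Z = \set{X}\times\set{Y}$, $P = P_{XY}$, and $Q = Q_X Q_Y$ to rewrite
\begin{IEEEeqnarray}{l}
\sup_{\alpha \in (0,1]} \frac{1 - \alpha}{\alpha} \bigl(D_\alpha(P_{XY}\|Q_X Q_Y) - \rateqxqy\bigr) = \inf_{\substack{R_{XY} \in \set{P}(\set{X} \times \set{Y}):\\D(R_{XY}\|Q_X Q_Y) \le \rateqxqy}} D(R_{XY}\|P_{XY}).\IEEEeqnarraynumspace\nonumber
\end{IEEEeqnarray}
Taking the infimum over $Q_X,Q_Y$ on both sides turns the task into showing that the nested infimum on the right equals the single infimum in \eqref{eq:defepofeq}.

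For that nested infimum I would swap the order of the two infima (this is always legitimate for a double infimum) to obtain
\begin{IEEEeqnarray}{l}
\inf_{Q_X,Q_Y}\ \inf_{\substack{R_{XY}:\\D(R_{XY}\|Q_X Q_Y) \le \rateqxqy}} D(R_{XY}\|P_{XY}) = \inf_{R_{XY}}\ D(R_{XY}\|P_{XY}) \cdot \mathbf{1}\Bigl\{\exists\, Q_X,Q_Y:\ D(R_{XY}\|Q_X Q_Y) \le \rateqxqy\Bigr\},\IEEEeqnarraynumspace\nonumber
\end{IEEEeqnarray}
i.e., the feasible set for $R_{XY}$ is exactly those $R_{XY}$ for which $\inf_{Q_X,Q_Y} D(R_{XY}\|Q_X Q_Y) \le \rateqxqy$. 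By Lemma~\ref{lma:drxyrxryledrxyqxqy} (specifically \eqref{eq:infdrxyqxqy}), this infimum equals $D(R_{XY}\|R_X R_Y)$, so the feasible set is precisely $\{R_{XY} : D(R_{XY}\|R_X R_Y) \le \rateqxqy\}$. Hence the nested infimum equals $\inf_{R_{XY}:\,D(R_{XY}\|R_X R_Y) \le \rateqxqy} D(R_{XY}\|P_{XY})$, which by \eqref{eq:defepofeq} is $\ratepxy(\rateqxqy)$, completing the proof.

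The only subtlety — and the step I would be most careful about — is the interchange of the minimizing $Q_X,Q_Y$ with the constraint, specifically making sure that the attainment of $\inf_{Q_X,Q_Y} D(R_{XY}\|Q_X Q_Y) = D(R_{XY}\|R_X R_Y)$ at $Q_X = R_X$, $Q_Y = R_Y$ (guaranteed by Lemma~\ref{lma:drxyrxryledrxyqxqy}) correctly reproduces the constraint set rather than a proper subset or superset of it; here one direction uses that if $D(R_{XY}\|R_X R_Y)\le\rateqxqy$ then the choice $Q_X=R_X,Q_Y=R_Y$ is feasible, and the other uses $D(R_{XY}\|Q_X Q_Y) \ge D(R_{XY}\|R_X R_Y)$ to conclude that feasibility of any $(Q_X,Q_Y)$ forces $D(R_{XY}\|R_X R_Y)\le\rateqxqy$. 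Since all sets involved are compact and the divergences are lower semicontinuous, no pathologies with the infima arise, so this is essentially a bookkeeping argument once Lemmas~\ref{lma:drxyrxryledrxyqxqy} and \ref{lma:infdrqdrpeqsupalpha} are in hand.
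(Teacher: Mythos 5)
Your argument is correct and is essentially the paper's own proof run in the opposite direction: the paper starts from \eqref{eq:defepofeq}, uses Lemma~\ref{lma:drxyrxryledrxyqxqy} to pass to the double infimum over $Q_{XY} \in \set{Q}$ and over $R_{XY}$ with $D(R_{XY}\|Q_{XY}) \le \rateqxqy$, and then applies Lemma~\ref{lma:infdrqdrpeqsupalpha} to the inner infimum---exactly the chain you assemble starting from the right-hand side, with the merging of the two infima spelled out via the feasibility argument. One notational caveat: your middle display should be read as an infimum over the restricted feasible set (with the infimum over an empty set being $+\infty$), not as a literal product with the indicator $\mathbf{1}\{\cdot\}$, which would wrongly return $0$ for infeasible $R_{XY}$; your surrounding text makes the intended reading clear, so this is not a gap.
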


\begin{proof}
We have
\begin{IEEEeqnarray}{rCl}
\ratepxy(\rateqxqy) &=& \inf_{\substack{R_{XY} \in \set{P}(\set{X} \times \set{Y}):\\D(R_{XY}\|R_X R_Y) \le \rateqxqy}} D(R_{XY}\|P_{XY})\IEEEeqnarraynumspace\label{eq:epeqwithsupa}\\
&=& \! \inf_{Q_{XY} \in \set{Q}} \inf_{\substack{R_{XY} \in \set{P}(\set{X} \times \set{Y}):\\D(R_{XY}\|Q_{XY}) \le \rateqxqy}} D(R_{XY}\|P_{XY})\IEEEeqnarraynumspace\label{eq:epeqwithsupb}\\
&=& \! \infvphantomsup_{Q_{XY} \in \set{Q}} \hspace{0.1em} \sup_{\alpha \in (0,1]} \frac{1 - \alpha}{\alpha} \bigl(D_\alpha(P_{XY}\|Q_{XY}) - \rateqxqy\bigr)\hspace{-0.1em},\hspace{-0.1em}\IEEEeqnarraynumspace\label{eq:epeqwithsupc}
\end{IEEEeqnarray}
where \eqref{eq:epeqwithsupa} follows from \eqref{eq:defepofeq};
\eqref{eq:epeqwithsupb} follows from Lemma~\ref{lma:drxyrxryledrxyqxqy}; and
\eqref{eq:epeqwithsupc} follows from Lemma~\ref{lma:infdrqdrpeqsupalpha}.
\end{proof}

\begin{Lemma}
\label{lma:qilowerbound}
Let $\set{Q}_i$ be defined as in \eqref{eq:qxyidef}, let $\alpha \in (0,1)$, and let $\beta\colon \set{X} \times \set{Y} \to \reals_{\ge 0}$.
Define
\begin{IEEEeqnarray}{l}
D \triangleq \inf_{Q_{XY} \in \set{Q}_i} \sum_{(x,y) \in \set{X} \times \set{Y}} \beta(x,y) \, Q_{XY}(x,y)^{1 - \alpha}.\IEEEeqnarraynumspace\label{eq:qibounddefd}
\end{IEEEeqnarray}
Then, for all $\rateqxqy \in \reals$,
\begin{IEEEeqnarray}{rCl}
\IEEEeqnarraymulticol{3}{l}{\infvphantomsup_{Q_{XY} \in \set{Q}_i} \hspace{0.1em} \sup_{\tilde{\alpha} \in (0,1]} \frac{1 - \tilde{\alpha}}{\tilde{\alpha}} \bigl(D_{\tilde{\alpha}}(P_{XY}\|Q_{XY}) - \rateqxqy\bigr)}\IEEEeqnarraynumspace\nonumber\\*\quad
&\ge& \frac{-1}{\alpha} \log \Biggl\{\Biggl[\,\sum_{(x,y) \in \set{X} \times \set{Y}} \bigl(P_{XY}(x,y)^{\alpha} + \beta(x,y)\bigr)^\frac{1}{\alpha}\Biggr]^{\alpha}\IEEEeqnarraynumspace\nonumber\\*[0.2ex]
&& \hphantom{\frac{-1}{\alpha} \log \Biggl\{\binaryemptyleft} - D\Biggr\} - \frac{1 - \alpha}{\alpha} \, \rateqxqy.\IEEEeqnarraynumspace\label{eq:lmaqilowerbound}
\end{IEEEeqnarray}
\end{Lemma}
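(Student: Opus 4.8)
\medskip

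The plan is to lower-bound the double optimization on the LHS by first restricting the supremum over $\tilde\alpha$ to the single value $\tilde\alpha=\alpha$, thereby reducing the problem to lower-bounding $\inf_{Q_{XY}\in\set{Q}_i}\frac{1-\alpha}{\alpha}\bigl(D_\alpha(P_{XY}\|Q_{XY})-\rateqxqy\bigr)$. Since $\alpha\in(0,1)$, we have $\frac{1-\alpha}{\alpha}>0$ and $\frac{1}{\alpha-1}<0$, so
\[
\frac{1-\alpha}{\alpha}D_\alpha(P_{XY}\|Q_{XY})=\frac{-1}{\alpha}\log\sum_{(x,y)}P_{XY}(x,y)^\alpha\,Q_{XY}(x,y)^{1-\alpha},
\]
and because $t\mapsto\frac{-1}{\alpha}\log t$ is decreasing, it suffices to \emph{upper}-bound $\sup_{Q_{XY}\in\set{Q}_i}\sum_{(x,y)}P_{XY}(x,y)^\alpha\,Q_{XY}(x,y)^{1-\alpha}$. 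So the whole lemma reduces to showing
\[
\sup_{Q_{XY}\in\set{Q}_i}\sum_{(x,y)}P_{XY}(x,y)^\alpha\,Q_{XY}(x,y)^{1-\alpha}\;\le\;\Biggl[\sum_{(x,y)}\bigl(P_{XY}(x,y)^\alpha+\beta(x,y)\bigr)^{\frac1\alpha}\Biggr]^{\alpha}-D.
\]

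\medskip

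The key device for the displayed bound is to write, for each $(x,y)$, the inequality $a\,q\le\frac{\alpha}{?}\dots$ — more precisely, a weighted-AM--GM / Young-type inequality. For nonnegative reals, $a^\alpha b^{1-\alpha}\le \alpha a+(1-\alpha)b$ does not directly help; instead I would use that for any $c(x,y)\ge 0$,
\[
P_{XY}(x,y)^\alpha\,Q_{XY}(x,y)^{1-\alpha}
=\bigl(c(x,y)^{-(1-\alpha)}P_{XY}(x,y)^\alpha\bigr)\cdot\bigl(c(x,y)\,Q_{XY}(x,y)\bigr)^{1-\alpha}\cdot c(x,y)^{0}
\]
and apply Hölder's inequality with exponents $\frac1\alpha$ and $\frac1{1-\alpha}$ across the sum over $(x,y)$:
\[
\sum_{(x,y)}P_{XY}(x,y)^\alpha\,Q_{XY}(x,y)^{1-\alpha}
\le\Biggl[\sum_{(x,y)} \frac{P_{XY}(x,y)}{c(x,y)^{(1-\alpha)/\alpha}}\Biggr]^{\alpha}\Biggl[\sum_{(x,y)}c(x,y)\,Q_{XY}(x,y)\Biggr]^{1-\alpha}.
\]
This is not yet the stated form. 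The actual route — matching the appearance of $D$ and of $\bigl(P_{XY}^\alpha+\beta\bigr)^{1/\alpha}$ — is to use the elementary bound $u^{1-\alpha}\le \alpha\,u^{-\beta'}\cdots$; concretely, for nonnegative $s,t,q$ one has $s\,q^{1-\alpha}\le (s^{1/\alpha}+t)^{\alpha}\,\bigl(\text{something}\bigr)^{1-\alpha}$, and the cleanest version is the pointwise inequality, valid for all $q\ge 0$,
\[
P_{XY}(x,y)^\alpha\,q^{1-\alpha}\;\le\;\bigl(P_{XY}(x,y)^\alpha+\beta(x,y)\bigr)^{\frac1\alpha\cdot\alpha}\!\!\cdot\text{(Hölder weight)}-\beta(x,y)\,q^{1-\alpha},
\]
which, summed over $(x,y)$ and combined with $\sum_{(x,y)}\beta(x,y)Q_{XY}(x,y)^{1-\alpha}\ge D$ for $Q_{XY}\in\set{Q}_i$, yields exactly the RHS bracket minus $D$. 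So the structure is: (i) Hölder with exponents $(1/\alpha,1/(1-\alpha))$ applied to the vectors $\bigl(P_{XY}(x,y)^\alpha+\beta(x,y)\bigr)^{1/\alpha}$ and $\bigl(\text{normalized }Q_{XY}\bigr)$, the latter summing to at most $1$; (ii) subtract the slack term $\sum_{(x,y)}\beta(x,y)Q_{XY}(x,y)^{1-\alpha}$, which the definition of $D$ in \eqref{eq:qibounddefd} controls from below uniformly over $\set{Q}_i$.

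\medskip

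The main obstacle is getting the Hölder split to land precisely on $\bigl(P_{XY}(x,y)^\alpha+\beta(x,y)\bigr)^{1/\alpha}$ rather than on $P_{XY}(x,y)$ alone: one must introduce the auxiliary $\beta$-perturbation \emph{before} applying Hölder, so that the term $\beta(x,y)\,Q_{XY}(x,y)^{1-\alpha}$ appears with the correct sign and can be absorbed into $-D$. This is the step where the convexity/superadditivity bookkeeping has to be done carefully: one checks that for $a,b\ge 0$ and $q\ge 0$,
\[
a^\alpha q^{1-\alpha}+b\,q^{1-\alpha}\;=\;(a^\alpha+b)\,q^{1-\alpha}\;\le\;(a^\alpha+b)^{\frac1\alpha\cdot\alpha}\cdot 1^{1-\alpha}
\]
is the \emph{wrong} comparison; instead the Hölder weight must be $Q_{XY}(x,y)\big/\!\sum_{(x',y')}Q_{XY}(x',y')=Q_{XY}(x,y)$ itself (since $Q_{XY}$ is a PMF), giving $\sum(P_{XY}^\alpha+\beta)^{1/\alpha}Q_{XY}\le\bigl[\sum(P_{XY}^\alpha+\beta)^{1/\alpha}\cdot 1\bigr]^\alpha\cdot 1^{1-\alpha}$ — no, this still needs $\sum(P_{XY}^\alpha+\beta)^{1/\alpha}$ raised to $\alpha$, which is exactly the displayed bracket. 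Once the Hölder application is pinned down, everything else is routine: plug the bound into $\frac{-1}{\alpha}\log(\cdot)$, note monotonicity, and add $-\frac{1-\alpha}{\alpha}\rateqxqy$ from restricting $\tilde\alpha=\alpha$. I would close by remarking that the bound is vacuous (the argument of the log could be negative or the bracket could fail to dominate $D$) unless $\beta$ and $\alpha$ are chosen compatibly, which is why the lemma is stated as an inequality to be \emph{verified} for particular choices rather than optimized in closed form.
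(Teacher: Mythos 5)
Your strategy coincides with the paper's proof: lower-bound the left-hand side by restricting the inner supremum to $\tilde{\alpha}=\alpha$ (the paper reaches the same point via the minimax inequality followed by the choice $\tilde{\alpha}=\alpha$), rewrite $\frac{1-\alpha}{\alpha}D_\alpha(P_{XY}\|Q_{XY})$ through the sum $\sum_z P_{XY}(z)^\alpha Q_{XY}(z)^{1-\alpha}$, add and subtract $\sum_z \beta(z)\,Q_{XY}(z)^{1-\alpha}$ so that the subtracted term is bounded below by $D$ from \eqref{eq:qibounddefd} uniformly over $\set{Q}_i$, and control the augmented sum by H\"older's inequality with exponents $1/\alpha$ and $1/(1-\alpha)$. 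All the ingredients of the paper's argument are present, and the reduction to the inequality $\sup_{Q_{XY}\in\set{Q}_i}\sum_z P_{XY}(z)^\alpha Q_{XY}(z)^{1-\alpha}\le\bigl[\sum_z(P_{XY}(z)^\alpha+\beta(z))^{1/\alpha}\bigr]^\alpha-D$ is exactly right.

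What you never actually pin down is the decisive H\"older step, and the displays you offer in its place are either placeholders or incorrect: the pointwise ``Young-type'' inequality contains an unspecified ``H\"older weight,'' and your final attempt, $\sum_z(P_{XY}(z)^\alpha+\beta(z))^{1/\alpha}Q_{XY}(z)\le\bigl[\sum_z(P_{XY}(z)^\alpha+\beta(z))^{1/\alpha}\bigr]^\alpha$, is both the wrong quantity and false in general (put all the $Q_{XY}$-mass on a point where $P_{XY}(z)^\alpha+\beta(z)>1$). No pointwise inequality is needed, and the perturbation by $\beta$ is purely an add-and-subtract at the level of the sum. Writing $\set{Z}\triangleq\set{X}\times\set{Y}$ and $a(z)\triangleq P_{XY}(z)^\alpha+\beta(z)$, the correct application is H\"older with conjugate exponents $1/\alpha$ and $1/(1-\alpha)$ to the factors $a(z)$ and $Q_{XY}(z)^{1-\alpha}$:
\begin{equation*}
\sum_{z\in\set{Z}} a(z)\,Q_{XY}(z)^{1-\alpha}\le\Bigl[\,\sum_{z\in\set{Z}} a(z)^{\frac{1}{\alpha}}\Bigr]^{\alpha}\Bigl[\,\sum_{z\in\set{Z}} Q_{XY}(z)\Bigr]^{1-\alpha}=\Bigl[\,\sum_{z\in\set{Z}} a(z)^{\frac{1}{\alpha}}\Bigr]^{\alpha},
\end{equation*}
since $Q_{XY}$ is a PMF. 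Combining this with $-\sum_z\beta(z)\,Q_{XY}(z)^{1-\alpha}\le-D$ for every $Q_{XY}\in\set{Q}_i$ (this uses $\sup(f+g)\le\sup f+\sup g$ after taking the supremum over $\set{Q}_i$), and inserting the resulting bound into $\frac{-1}{\alpha}\log(\cdot)-\frac{1-\alpha}{\alpha}\rateqxqy$, gives \eqref{eq:lmaqilowerbound}. With this one line repaired, your argument is the paper's proof; your closing worry about vacuity is not a correctness issue, since the lemma only claims an inequality.
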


\begin{proof}
In the \hyperlink{prf:qilowerbound}{Appendix}.
\end{proof}

\pagebreak

\begin{Remark}
\label{rmk:epeqlowerbound}
We finish with a few comments about the verification of Example~\ref{exa:nonconvex}.
\begin{itemize}
\item Computing $D$ in Lemma~\ref{lma:qilowerbound} for fixed $\set{Q}_i$, $\alpha$, and $\beta$ is easy:
One can show that there exist an extreme point $Q_X^*$ of $\set{Q}_{X,i}$ and an extreme point $Q_Y^*$ of $\set{Q}_{Y,i}$ such that $Q_X^* Q_Y^*$ achieves the infimum in the RHS of \eqref{eq:qibounddefd}.
(This basically holds because $\set{Q}_{X,i}$ and $\set{Q}_{Y,i}$ are bounded convex polytopes and because the objective function is concave in $Q_X$ for fixed $Q_Y$ and concave in $Q_Y$ for fixed $Q_X$.)
Since $\set{Q}_{X,i}$ and $\set{Q}_{Y,i}$ have at most six extreme points, there are at most 36 candidate points.
One can evaluate the objective function at the candidate points; the minimum function value among these is equal to $D$.
\item To establish \eqref{eq:nonconvexubc}, we use \eqref{eq:qunionqi} with $k = 1\,323\,238$.
To ensure that \eqref{eq:qunionqi} holds, we start with a collection $\set{C}$ of sets that initially contains only $\set{Q}$;
we iteratively remove a $\set{Q}_i$ from $\set{C}$, split it into two parts, and add each part to $\set{C}$;
and we stop when $\set{C}$ has the desired structure.
\item We use interval arithmetic \cite{MPFI} to obtain exact bounds.
\item The splits to obtain $\set{C}$, the $\alpha$ and $\beta$ needed in Lemma~\ref{lma:qilowerbound} for every $i \in \{1,\ldots,k\}$, and the code that allows for a mathematically rigorous verification of our bounds can be found in \cite{github}.
\end{itemize}
\end{Remark}

\appendix
\label{sec:appendix}

\begin{proof}[Proof of Lemma~\ref{lma:achievablesanov}]
\hypertarget{prf:achievablesanov}{}
We skip the proofs of \eqref{eq:achsanova} and \eqref{eq:achsanovb}, which are similar to those of \eqref{eq:achsanovc} and \eqref{eq:achsanovd}.
For fixed $Q_X \in \set{P}(\set{X})$ and $Q_Y \in \set{P}(\set{Y})$, \eqref{eq:achsanovc} holds because
\begin{IEEEeqnarray}{rCl}
(Q_X Q_Y)^{\times n}[D(\hat{R}_{XY}\|P_{XY}) < \ratepxy + \epsilon] &\le& \tau \, e^{-n A}\IEEEeqnarraynumspace\label{eq:qxqysanovepepsilona}\\
&\le& \tau \, e^{-n (\rateqxqy + \epsilon)},\IEEEeqnarraynumspace
\end{IEEEeqnarray}
where \eqref{eq:qxqysanovepepsilona} follows from Sanov's theorem \cite[Theorem~11.4.1]{CoverThomas} with
\begin{IEEEeqnarray}{rCl}
A &\triangleq& \inf_{\substack{R_{XY} \in \set{P}(\set{X} \times \set{Y}):\\D(R_{XY}\|P_{XY}) < \ratepxy + \epsilon}} D(R_{XY}\|Q_X Q_Y)\IEEEeqnarraynumspace\\
&\ge& \inf_{\substack{R_{XY} \in \set{P}(\set{X} \times \set{Y}):\\D(R_{XY}\|P_{XY}) < \ratepxy + \epsilon}} D(R_{XY}\|R_X R_Y)\IEEEeqnarraynumspace\label{eq:qxqysanovepepsilond}\\
&\ge& \rateqxqy + \epsilon,\IEEEeqnarraynumspace\label{eq:qxqysanovepepsilone}
\end{IEEEeqnarray}
where \eqref{eq:qxqysanovepepsilond} follows from Lemma~\ref{lma:drxyrxryledrxyqxqy}, and
\eqref{eq:qxqysanovepepsilone} holds because $D(R_{XY}\|P_{XY}) < \ratepxy + \epsilon$ implies $D(R_{XY}\|R_X R_Y) \ge \rateqxqy + \epsilon$ by \eqref{eq:lmarxyepsepsilon}.
Similarly, for fixed $Q_X \in \set{P}(\set{X})$ and $Q_Y \in \set{P}(\set{Y})$, \eqref{eq:achsanovd} holds because
\begin{IEEEeqnarray}{rCl}
(Q_X Q_Y)^{\times n}[D(\hat{R}_{XY}\|\hat{R}_X \hat{R}_Y) \ge \rateqxqy + \epsilon] &\le& \tau \, e^{-n B}\IEEEeqnarraynumspace\label{eq:qxqysanovepepsilonk}\\
&\le& \tau \, e^{-n (\rateqxqy + \epsilon)},\hspace{-0.09em}\IEEEeqnarraynumspace
\end{IEEEeqnarray}
where \eqref{eq:qxqysanovepepsilonk} follows from Sanov's theorem with
\begin{IEEEeqnarray}{rCl}
B &\triangleq& \inf_{\substack{R_{XY} \in \set{P}(\set{X} \times \set{Y}):\\D(R_{XY}\|R_X R_Y) \ge \rateqxqy + \epsilon}} D(R_{XY}\|Q_X Q_Y)\IEEEeqnarraynumspace\\
&\ge& \inf_{\substack{R_{XY} \in \set{P}(\set{X} \times \set{Y}):\\D(R_{XY}\|R_X R_Y) \ge \rateqxqy + \epsilon}} D(R_{XY}\|R_X R_Y)\IEEEeqnarraynumspace\label{eq:qxqysanovepepsilonn}\\
&\ge& \rateqxqy + \epsilon,\IEEEeqnarraynumspace
\end{IEEEeqnarray}
where \eqref{eq:qxqysanovepepsilonn} follows from Lemma~\ref{lma:drxyrxryledrxyqxqy}.
\end{proof}

\begin{proof}[Proof of Lemma~\ref{lma:exponentnotachievable}]
\hypertarget{prf:exponentnotachievable}{}
Let $R_{XY}^{(1)}, R_{XY}^{(2)}, \ldots$ be a sequence of types and let $\delta_1, \delta_2, \ldots$ be a sequence of numbers with $\lim_{n \to \infty} \delta_n = 0$ such that for all $n \in \{1, 2, \ldots\}$, the following two inequalities are satisfied:
\begin{IEEEeqnarray}{rCl}
D\bigl(R_{XY}^{(n)}\big\|P_{XY}\bigr) &\le& \ratepxy + \delta_n,\IEEEeqnarraynumspace\\
D\bigl(R_{XY}^{(n)}\big\|R_X^{(n)} R_Y^{(n)}\bigr) &\le& \rateqxqy + \delta_n.\IEEEeqnarraynumspace\label{eq:rxyseqclosetoqxqy}
\end{IEEEeqnarray}
(The existence of such sequences follows from \eqref{eq:rxystarclosetopxyandrxry} and a continuity argument.)
Fix a sequence of tests $\{T_n\}_{n=1}^\infty$, and consider the function $f\colon \{1, 2, \ldots\} \to \reals$,
\begin{IEEEeqnarray}{l}
n \mapsto \min \Bigl\{ -\frac{1}{n} \log P_{XY}^{\times n}[T_n(X^n,Y^n) = 1] - \ratepxy,\IEEEeqnarraynumspace\nonumber\\*
\hspace{1.4em} \inf_{\qxcommaqylessspace} -\frac{1}{n} \log (Q_X Q_Y)^{\times n}[T_n(X^n,Y^n) = 0] - \rateqxqy \Bigr\}. \,\IEEEeqnarraynumspace
\end{IEEEeqnarray}
For a fixed $n \in \{1, 2, \ldots\}$, suppose that $\{(X_i,Y_i)\}_{i=1}^n$ are drawn uniformly at random from the type class corresponding to $R_{XY}^{(n)}$.
Observe that
\begin{IEEEeqnarray}{C}
\max\bigl\{\pr{T_n(X^n,Y^n) = 0}, \pr{T_n(X^n,Y^n) = 1}\bigr\} \ge \frac{1}{2}.\IEEEeqnarraynumspace\label{eq:prtngeoh}
\end{IEEEeqnarray}
If $\pr{T_n(X^n,Y^n) = 0} \ge \frac{1}{2}$, then
\begin{IEEEeqnarray}{rCl}
\IEEEeqnarraymulticol{3}{l}{(R_X^{(n)} R_Y^{(n)})^{\times n}[T_n(X^n,Y^n) = 0]}\IEEEeqnarraynumspace\nonumber\\*\qquad
&\ge& \frac{1}{2} \cdot \frac{1}{(n + 1)^{\card{\set{X} \times \set{Y}}}} \, e^{-n D(R_{XY}^{(n)}\|R_X^{(n)} R_Y^{(n)})},\IEEEeqnarraynumspace\label{eq:pxytnbadonehalf}
\end{IEEEeqnarray}
where \eqref{eq:pxytnbadonehalf} follows from \cite[Lemma~2.6]{CsiszarKorner}.
Thus,
\begin{IEEEeqnarray}{rCl}
\IEEEeqnarraymulticol{3}{l}{-\frac{1}{n} \log (R_X^{(n)} R_Y^{(n)})^{\times n}[T_n(X^n,Y^n) = 0] - \rateqxqy}\IEEEeqnarraynumspace\nonumber\\*\quad
&\le& \frac{\log \bigl[2 \,(n + 1)^{\card{\set{X} \times \set{Y}}}\bigr]}{n} + D\bigl(R_{XY}^{(n)}\big\|R_X^{(n)} R_Y^{(n)}\bigr) - \rateqxqy.\IEEEeqnarraynumspace
\end{IEEEeqnarray}
Together with \eqref{eq:rxyseqclosetoqxqy}, this implies
\begin{IEEEeqnarray}{rCl}
f(n) \le \frac{\log \bigl[2 \,(n + 1)^{\card{\set{X} \times \set{Y}}}\bigr]}{n} + \delta_n.\IEEEeqnarraynumspace\label{eq:minerrorexpupperbound}
\end{IEEEeqnarray}
Using similar arguments, it can be shown that \eqref{eq:minerrorexpupperbound} also holds if $\pr{T_n(X^n,Y^n) = 1} \ge \frac{1}{2}$.
We conclude that for any sequence of tests $\{T_n\}_{n=1}^\infty$, $\limsup_{n \to \infty} f(n) \le 0$, which implies that \eqref{eq:deferrexponentpxy} and \eqref{eq:deferrexponentqxqy} cannot hold simultaneously.
(All steps of the proof remain valid when randomized tests are allowed.)
\end{proof}

\begin{proof}[Proof of Lemma~\ref{lma:qilowerbound}]
\hypertarget{prf:qilowerbound}{}
Observe that
\begin{IEEEeqnarray}{rCl}
\IEEEeqnarraymulticol{3}{l}{\infvphantomsup_{Q_{XY} \in \set{Q}_i} \hspace{0.1em} \sup_{\tilde{\alpha} \in (0,1]} \frac{1 - \tilde{\alpha}}{\tilde{\alpha}} \bigl(D_{\tilde{\alpha}}(P_{XY}\|Q_{XY}) - \rateqxqy\bigr)}\IEEEeqnarraynumspace\nonumber\\*\quad
&\ge& \sup_{\tilde{\alpha} \in (0,1]} \hspace{0.1em} \infvphantomsup_{Q_{XY} \in \set{Q}_i} \frac{1 - \tilde{\alpha}}{\tilde{\alpha}} \bigl(D_{\tilde{\alpha}}(P_{XY}\|Q_{XY}) - \rateqxqy\bigr)\IEEEeqnarraynumspace\label{eq:qisupinfa}\\
&\ge& \frac{1-\alpha}{\alpha} \left[\inf_{Q_{XY} \in \set{Q}_i} D_\alpha(P_{XY}\|Q_{XY}) - \rateqxqy\right]\!,\IEEEeqnarraynumspace\label{eq:qisupinfb}
\end{IEEEeqnarray}
where \eqref{eq:qisupinfa} follows from the minimax inequality \cite[(2.28)]{Bertsekas}, and
\eqref{eq:qisupinfb} holds because $\alpha \in (0,1)$ and $\frac{1-\alpha}{\alpha} \ge 0$.
Moreover,
\begin{IEEEeqnarray}{rCl}
\IEEEeqnarraymulticol{3}{l}{\inf_{Q_{XY} \in \set{Q}_i} D_\alpha(P_{XY}\|Q_{XY})}\IEEEeqnarraynumspace\nonumber\\*\quad
&=& \frac{1}{\alpha-1} \log \sup_{Q \in \set{Q}_i} \sum_{(x,y) \in \set{X} \times \set{Y}} P_{XY}(x,y)^\alpha \, Q(x,y)^{1-\alpha},\IEEEeqnarraynumspace\label{eq:qiinfsuma}
\end{IEEEeqnarray}
where \eqref{eq:qiinfsuma} holds because $\frac{1}{\alpha-1} < 0$ and because the logarithm is an increasing function.
Next, using the shorthand notations $z \triangleq (x,y)$, $\set{Z} \triangleq \set{X} \times \set{Y}$, and $P \triangleq P_{XY}$,
\begin{IEEEeqnarray}{rCll}
\IEEEeqnarraymulticol{4}{l}{\sup_{Q \in \set{Q}_i} \sum_{z \in \set{Z}} P(z)^\alpha \, Q(z)^{1-\alpha}}\IEEEeqnarraynumspace\nonumber\\*
\quad &=& \sup_{Q \in \set{Q}_i} \Biggl[ \, & \sum_{z \in \set{Z}} P(z)^\alpha \, Q(z)^{1-\alpha} + \sum_{z \in \set{Z}} \beta(z) \, Q(z)^{1-\alpha}\IEEEeqnarraynumspace\nonumber\\*[-0.5ex]
&&& -\> \sum_{z \in \set{Z}} \beta(z) \, Q(z)^{1-\alpha}\Biggr]\IEEEeqnarraynumspace\label{eq:supqibounda}\\
&=& \sup_{Q \in \set{Q}_i} \Biggl[ \, & \sum_{z \in \set{Z}} \bigl(P(z)^\alpha + \beta(z)\bigr) \, Q(z)^{1-\alpha}\IEEEeqnarraynumspace\nonumber\\*[-0.5ex]
&&& -\> \sum_{z \in \set{Z}} \beta(z) \, Q(z)^{1-\alpha}\Biggr]\IEEEeqnarraynumspace\label{eq:supqiboundb}\\
&\le& \IEEEeqnarraymulticol{2}{l}{\sup_{Q \in \set{Q}_i} \sum_{z \in \set{Z}} \bigl(P(z)^\alpha + \beta(z)\bigr) \, Q(z)^{1-\alpha}}\IEEEeqnarraynumspace\nonumber\\*
&& \IEEEeqnarraymulticol{2}{l}{-\> \inf_{Q \in \set{Q}_i} \sum_{z \in \set{Z}} \beta(z) \, Q(z)^{1-\alpha}}\IEEEeqnarraynumspace\label{eq:supqiboundc}\\
&=& \IEEEeqnarraymulticol{2}{l}{\sup_{Q \in \set{Q}_i} \sum_{z \in \set{Z}} \bigl(P(z)^\alpha + \beta(z)\bigr) \, Q(z)^{1-\alpha} - D}\IEEEeqnarraynumspace\label{eq:supqiboundd}\\
&\le& \IEEEeqnarraymulticol{2}{l}{\sup_{Q \in \set{Q}_i} \Biggl\{\Biggl[\,\sum_{z \in \set{Z}} \bigl(P(z)^\alpha + \beta(z)\bigr)^{\frac{1}{\alpha}}\Biggr]^{\alpha}}\IEEEeqnarraynumspace\nonumber\\*[-0.5ex]
&& \IEEEeqnarraymulticol{2}{l}{\hphantom{\sup_{Q \in \set{Q}_i} \Biggl\{} \cdot \!\Biggl[\,\sum_{z \in \set{Z}} Q(z)\Biggr]^{1-\alpha} \Biggr\} - D}\IEEEeqnarraynumspace\label{eq:supqibounde}\\[0.5ex]
&=& \IEEEeqnarraymulticol{2}{l}{\Biggl[\,\sum_{z \in \set{Z}} \bigl(P(z)^\alpha + \beta(z)\bigr)^{\frac{1}{\alpha}}\Biggr]^{\alpha} - D,}\IEEEeqnarraynumspace\label{eq:supqiboundf}
\end{IEEEeqnarray}
where \eqref{eq:supqiboundc} holds because the supremum of a sum is upper bounded by the sum of the suprema;
\eqref{eq:supqiboundd} follows from the definition of $D$;
\eqref{eq:supqibounde} follows from H\"older's inequality; and
\eqref{eq:supqiboundf} holds because $\sum_{z \in \set{Z}} Q(z) = 1$.
Now, \eqref{eq:lmaqilowerbound} follows from combining \eqref{eq:qisupinfb}, \eqref{eq:qiinfsuma}, and \eqref{eq:supqiboundf}.
\end{proof}

\end{document}